\documentclass[11pt]{article}
\usepackage[margin=1in]{geometry}                
\geometry{letterpaper}                   
\usepackage{graphicx}
\usepackage{amssymb}
\usepackage{epstopdf}
\usepackage{amsmath}
\usepackage{amsthm}
\usepackage{url}
\usepackage{authblk}
\usepackage{subcaption}

\usepackage{xr-hyper} 
\usepackage{hyperref}

\newtheorem{lemma}{Lemma}
\newtheorem{corollary}{Corollary}

\bibliographystyle{plos2009}
\graphicspath{{./Figures/}}

\usepackage{setspace}
\doublespace

\graphicspath{{./Figures/}} 
\newcommand\myeq{\mathrel{\stackrel{\makebox[0pt]{\mbox{\normalfont\tiny def}}}{=}}}
\newcommand{\ci}{\mathrel{\text{\scalebox{1.07}{$\perp\mkern-10mu\perp$}}}}

\DeclareGraphicsRule{.tif}{png}{.png}{`convert #1 `dirname #1`/`basename #1 .tif`.png}

\title{Combining controls can improve power in two-stage association studies}
\author[1]{James Liley}
\affil[1]{Department of Medicine, University of Cambridge, Addenbrooke's Hospital, Cambridge, CB2 0SP, UK. E: \href{mailto:ajl88@cam.ac.uk}{ajl88@cam.ac.uk}}
\date{}

\begin{document}
\maketitle
\clearpage

\section*{Abstract}

High dimensional case control studies are ubiquitous in the biological sciences, particularly genomics. To maximise power while constraining cost and to minimise type-1 error rates, researchers typically seek to replicate findings in a second experiment on independent cohorts before proceeding with further analyses.

This paper presents a method in which control (or case) samples from the discovery cohort are re-used in the replication study. The theoretical implications of this method are discussed and simulations used to compare performance against the standard method in a range of circumstances. In several common study designs, a shared-control method allows a substantial improvement in power while retaining type-1 error rate control. 

An important area of potential application arises when control samples are difficult to recruit or ascertain; for example in inter-disease comparisons, or studies on degenerative diseases.
Using similar methods, a procedure is proposed for `partial replication' using a new independent cohort consisting of only controls. This methods can be used to provide some validation of findings when a full replication procedure is not possible.

The new method has differing sensitivity to confounding in study cohorts compared to the standard procedure, which must be considered in its application. Type-1 error rates in these scenarios are analytically and empirically derived, and an online tool for comparing power and error rates is provided. 

Although careful consideration must be made of all necessary assumptions, this method can enable more efficient use of data in genome-wide association studies (GWAS) and other applications.

\clearpage

\section*{Introduction}
High-dimensional case-control studies have become a mainstay of investigation of pathophysiology in complex diseases and traits. An important part of their analysis is the process of replication~\cite{wason12}, in which the results of a high-dimensional study are used to inform the design of a second study at a subset of the original variables, with a joint analysis used to determine overall association. 

Replicating studies in this way has the advantage of increasing the effective study sample sizes without requiring measurement of all variables in all samples. It also serves to protect against false-positives due to systematic errors in the original datasets, by re-testing association in a second nominally independent dataset.

Replication has a significant cost, and can require large numbers of samples, especially when associated variables have small effects (ie \cite{fuchsberger16}). There is therefore a need to minimise the number of additional samples which need to be analysed. This paper presents a method to perform replication by combining controls in both the original `discovery' and second `replication' datasets, potentially reducing the number of new samples required. Shared-control approaches can improve study efficiency in many related applications in which studies are compared~\cite{lin09,han16,bhattacharjee12,zaykin10,liley15,fortune15}. 

Results from original and replication datasets for which some or all controls are shared cannot be directly compared due to the correlation between test statistics directly resulting from shared controls even under the null hypothesis~\cite{bhattacharjee12}; use of the same thresholds in a shared-control design as used in an independent-controls design will lead to higher type-1 error rates. 
This paper demonstrates a simple adaptation to a standard design to account for the changed correlation structure and retain control of type-1 error rate, only requiring a change to one p-value threshold. 

The action of sharing control samples results in a different spectrum of sensitivity to confounding in study groups. It necessitates a sacrifice of type-1 error rate control in variables affected by confounding in the discovery-phase control cohort, but improves type-1 error rate control in variables affected by confounding in the replication-phase control cohort. Performance is largely equivalent to an independent-controls design for variables affected by confounding in either case cohort. 

The new spectrum of false positive rates can be advantageous in circumstances where control samples in the replication cohort are less well-ascertained than those in the discovery cohort. This may be the case in studies on degenerative disease, where control ascertainment is generally uncertain, and population-sourced controls may be used for replication. The shared-control design can reduce power losses from mis-specified controls in the replication cohort, as well as reducing false-positive rates caused by confounding in the cohort. 

When used with shared cases instead of controls, this method can be adapted to a `partial replication' procedure where only a new control set is used. Although not equivalent to a full replication in an independent dataset, the procedure enables improvement in type-1 error rates and control over confounding. This is applicable in studies on rare traits, where all available samples need to be included in the discovery analysis for adequate power.

Throughout this paper, GWAS terminology will be used (SNPs, allele frequency, variants etc) although the method is applicable to any high-dimensional case control study. `Controls' will be considered to generally be samples unaffected by a disease or trait of interest, although the method can be applied with case/control labels swapped, or applied to comparisons between subgroups of a case group. Asymptotic analytical results are established where possible, but all type 1/type 2 error rates are readily tractable empirically to good accuracy given study sizes and proposed p-value thresholds, and a tool is provided to do this at \url{https://wallacegroup-liley.shinyapps.io/replication_shared/}.

\section*{Results}

\subsection*{Overview of method}

We assume a GWAS dataset of a set of cases $C_{1}$ and controls $C_{0}$ used in a `discovery' phase of a GWAS or similar study, and corresponding sets of cases and controls $C_{1}'$, $C_{0}'$ in the replication phase. We assume that $C_{0}$ and $C_{1}$ are genotyped at a set of SNPs $S$ and $C_{0}'$, $C_{1}'$ at a set $S' \subseteq S$. 

For each SNP we designate $\mu_{1}$, $\mu_{0}$, $\mu_{1}'$, $\mu_{0}'$ as the population minor allele frequency in the corresponding group, and $m_{1}$, $m_{0}$, $m_{1}'$, $m_{0}'$ as the observed allele frequency (so $E(m_{i})=\mu_{i}$). We designate two null hypotheses; $H_{0}^{\cup}:(\mu_{1}=\mu_{0}) \cup (\mu_{1}'=\mu_{0}')$ and $H_{0}^{\cap}:(\mu_{1}=\mu_{0} = \mu_{1}'=\mu_{0}')$, noting that $H_{0}^{\cup} \supseteq H_{0}^{\cap}$. In a typical conservative GWAS approach, we seek to test against $H_{0}^{\cup}$, since $\mu_{1} \neq \mu_{0}$ or $\mu_{1}' \neq \mu_{0}'$ may hold at non-disease associated SNPs due to confounding in the original or replication studies respectively. 

A typical two-stage genetic testing procedure~\cite{wason12}, which we will refer to as method A, 
begins by comparing genotypes of $C_{1}$ and $C_{0}$ at SNPs $S$ generating p-values $p_{d}$ (discovery). A subset $S'$ of SNPs reaching putative significance level $p_{d}<\alpha$ are genotyped in $C_{0}'$ and $C_{1}'$, with genotypes compared to generate p-values $p_{r}$ (replication stage). Finally, genotypes are compared between $C_{0} \cup C_{0}'$ and $C_{1} \cup C_{1}'$ at SNPs $S'$ to generate p-values $p_{m}$ (meta-analytic stage). SNPs are designated as `hits' if $p_{d}< \alpha,p_{r}<\beta,p_{m}<\gamma$ for some $\beta$, $\gamma$, and all effects have the same direction.
%

The main modification proposed in this paper, denoted as method B, differs at the replication stage in that $C_{1}'$ is compared with $C_{0} \cup C_{0}'$ at $S'$ instead of just $C_{0}'$. The p-values resulting from the modified replication stage are termed $p_{s}$, and the criterion to designate a hit changed to $p_{d}< \alpha,p_{s}<\beta^*,p_{m}<\gamma$, with all effects in the same direction. The threshold $\beta^*$ is chosen to conserve type-1 error rate between methods (see methods section, appendix~\ref{apx:zcor}).

A second modification, denoted method C, combines $C_{0}$ and $C_{0}'$ at both the discovery and replication phase. This is analagous to a situation in which only a single control cohort is available, and a choice must be made to split it between discovery and replication procedures or to use it for both. In this case, $C_{0} \cup C_{0}'$ is compared with $C_{1}$ at SNPs $S$ in the discovery phase to produce p-values $p_{c}$, then $C_{0} \cup C_{0}'$ is compared with $C_{1}'$ at SNPs $S'$ at the replication phase and compared with $C_{1} \cup C_{1}'$ at the meta-analytic stage to produce p-values $p_{s}$ and $p_{m}$ as before. A hit is determined by $p_{d}< \alpha,p_{s}<\beta^\perp,p_{m}<\gamma$, with all effects in the same direction. Again, $\beta^\perp$ is chosen to maintain the type-1 error rate between methods.

\begin{figure}
\includegraphics[width=\textwidth]{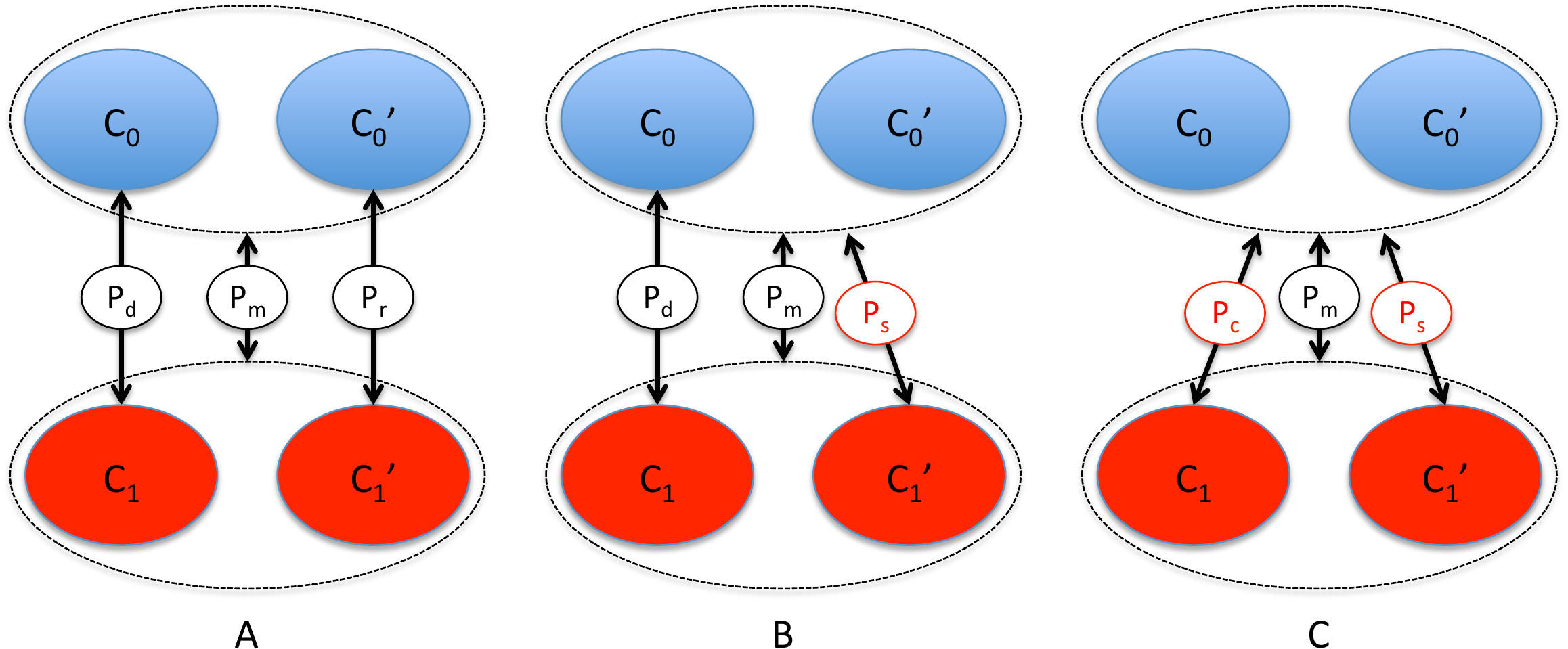}
\caption{Diagram of methods A, B, and C. Method B differs by comparing $C_{1}'$ to pooled $C_{0}$ and $C_{0}'$ at the replication stage to generate p-value $P_{s}$ instead of $P_{r}$. Method C also pools controls at the discovery phase, comparing $C_{1}$ to pooled $C_{0}$ and $C_{0}'$ to generate p-values $P_{c}$ instead of $P_{d}$. A `hit' is declared in method A if $P_{d}<\alpha$, $P_{r}<\beta$, $P_{m}<\gamma$, in method B if $P_{d}<\alpha$, $P_{s}<\beta^*$, $P_{m}<\gamma$ and in method C if $P_{c}<\alpha$, $P_{s}<\beta^\perp$, $P_{m}<\gamma$.
}
\label{fig:exp}
\end{figure}


%
%
\subsection*{General properties}

For SNPs in $H_{0}^{\cap}$, the overall type-1 error rate is conserved between methods by the definition of $\beta^*$, $\beta^{\perp}$ (equation~\ref{eq:betastardef}) at a level $P_{0}$. It is shown in appendix~\ref{apx:betabound} that $\beta>\beta^*>\beta^\perp$. 
For SNPs in $H_{0}^{\cup} \setminus H_{0}^{\cap}$ the type-1 error rates differ between methods. Such SNPs may be characterised by the group(s) amongst $C_{0}$, $C_{1}$, $C_{0}'$, $C_{1}'$ in which their expected MAF is aberrant from the expected MAF in the population which the group ostensibly represents. `Aberrance' is taken to mean an incorrect expected value from systematic measurement error or uncorrected confounding, rather than random deviance around a correct expected value.



%



Bounds on type-1 error rates with aberrance in each group are shown in table~\ref{tab:t1r}. Methods B and C necessitate sacrificing bounds on error rates with aberrance in $C_0$ and $C_{0}$,$C_{0}'$ respectively. The bound on error with aberrance in $C_{1}'$ improves through methods A-C. In the methods section, it is shown that the type-1 error with aberrance in $C_{0}'$ decreases from methods A to B, and the error with aberrance in $C_{1}'$ increases from A through C, although the upper bound is the same for both.

\begin{table}[h]
\centering
\caption{Upper bounds on type 1 error rates with aberrance in cohorts, with $\beta>\beta^*>\beta^\perp$}
\label{tab:t1r}
\begin{tabular}{lccccc}
  & \multicolumn{5}{c}{Aberrant}  \\
  & None & $C_0$ &  $C_0'$ & $C_1$ & $C_{1}'$ \\
M. A & $P_{0}$ & $\beta$ & $\alpha$ & $\beta$ & $\alpha$ \\
M. B & $P_{0}$ & 1 & $\alpha$ & $\beta^*$ & $\alpha$ \\
M. C & $P_{0}$ & 1 & 1 & $\beta^\perp$ & $\alpha$   
\end{tabular}
\end{table}

\subsection*{Simulation}


The power difference between methods B and A was analysed systematically across a range of values of $(n_{0},n_{1},n_{0}',n_{1}')$. Average power difference and maximum power difference were compared (see methods section). Figure~\ref{fig:power} shows power difference at various study sizes for typical $\alpha$, $\beta$, $\gamma$ values ($\alpha=5 \times 10^{-6}$, $\beta=5 \times 10^{-4}$, $\gamma=5 \times 10^{-8}$) and minor allele frequency 0.1. The difference is typically highest when the ratio of controls to cases is high in the discovery cohort and low or equal in the replication cohort, and the number of cases in the discovery cohort is larger than the number in the replication cohort. Power to detect SNPs in $H_{1}$ is typically highest in method C, second-highest in method B, and lowest in method A. 


\begin{figure}
\begin{subfigure}[b]{\textwidth}
\begin{subfigure}[b]{0.4\textwidth}
\includegraphics[width=\textwidth]{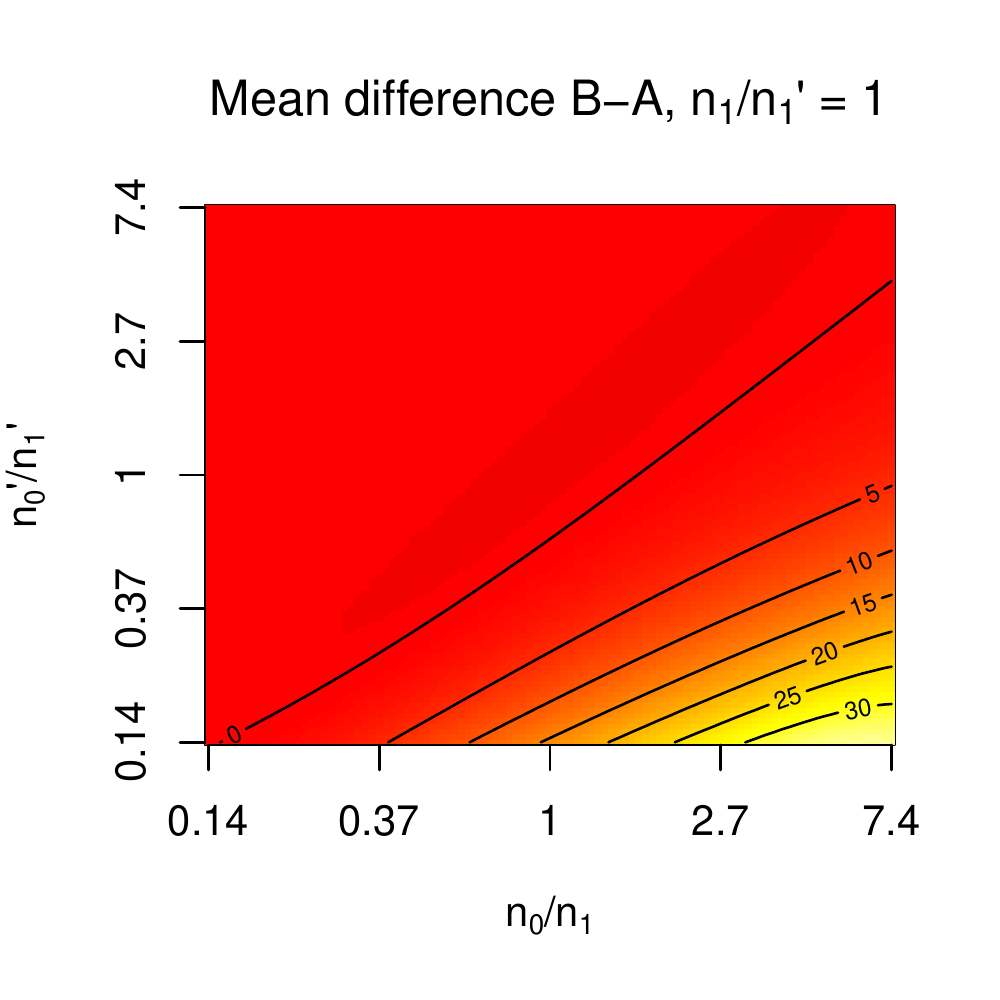}
\end{subfigure}
\begin{subfigure}[b]{0.4\textwidth}
\includegraphics[width=\textwidth]{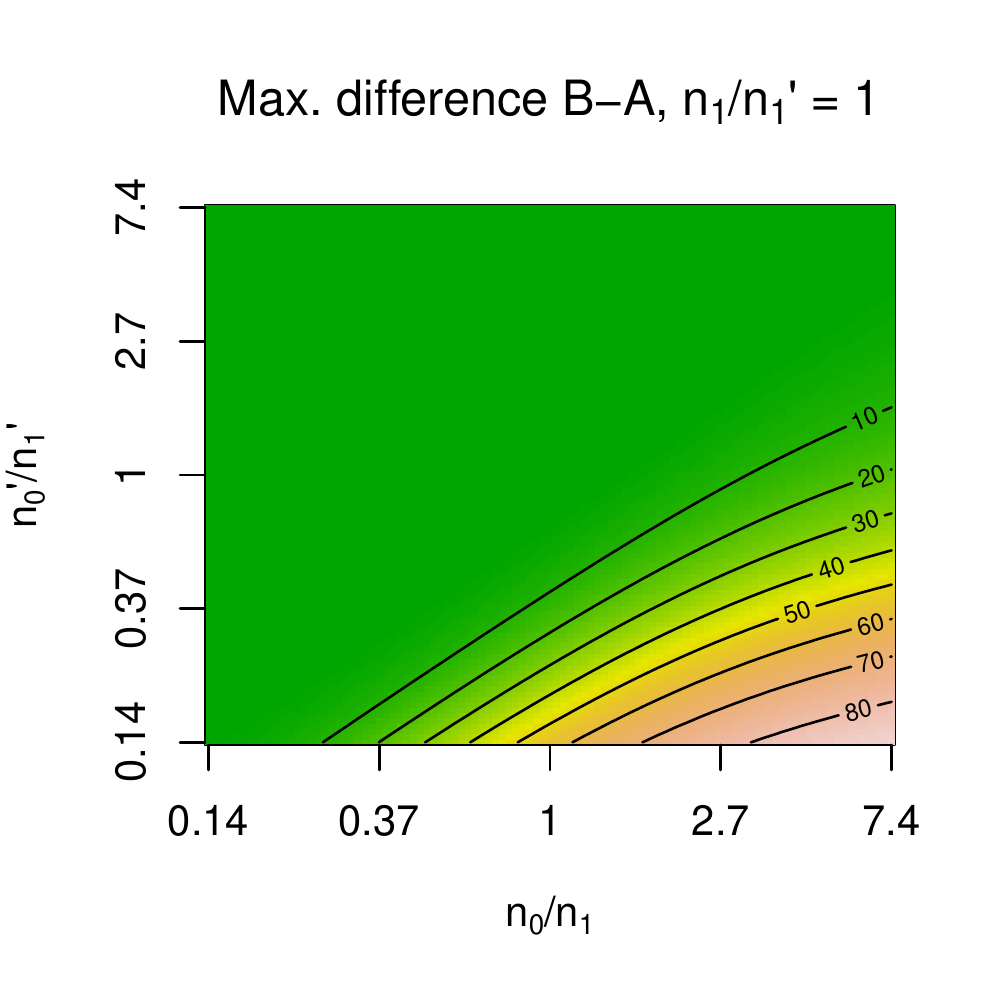}
\end{subfigure}
\begin{subfigure}[b]{0.13\textwidth}
\includegraphics[width=\textwidth]{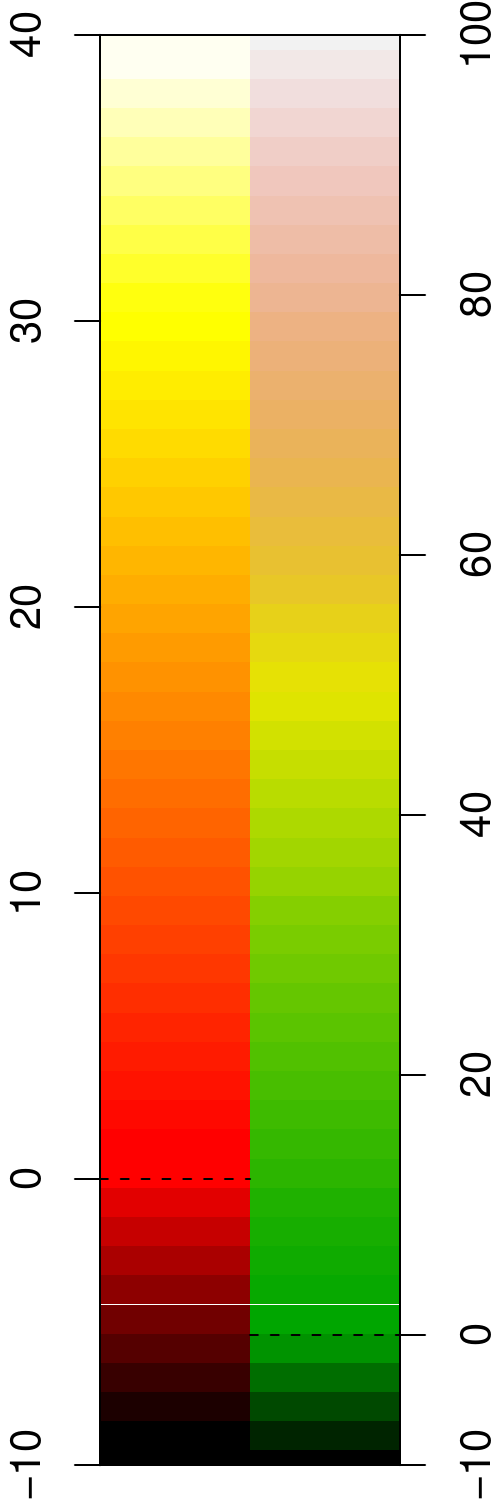}
\end{subfigure}
\end{subfigure}

\begin{subfigure}[b]{\textwidth}
\begin{subfigure}[b]{0.4\textwidth}
\includegraphics[width=\textwidth]{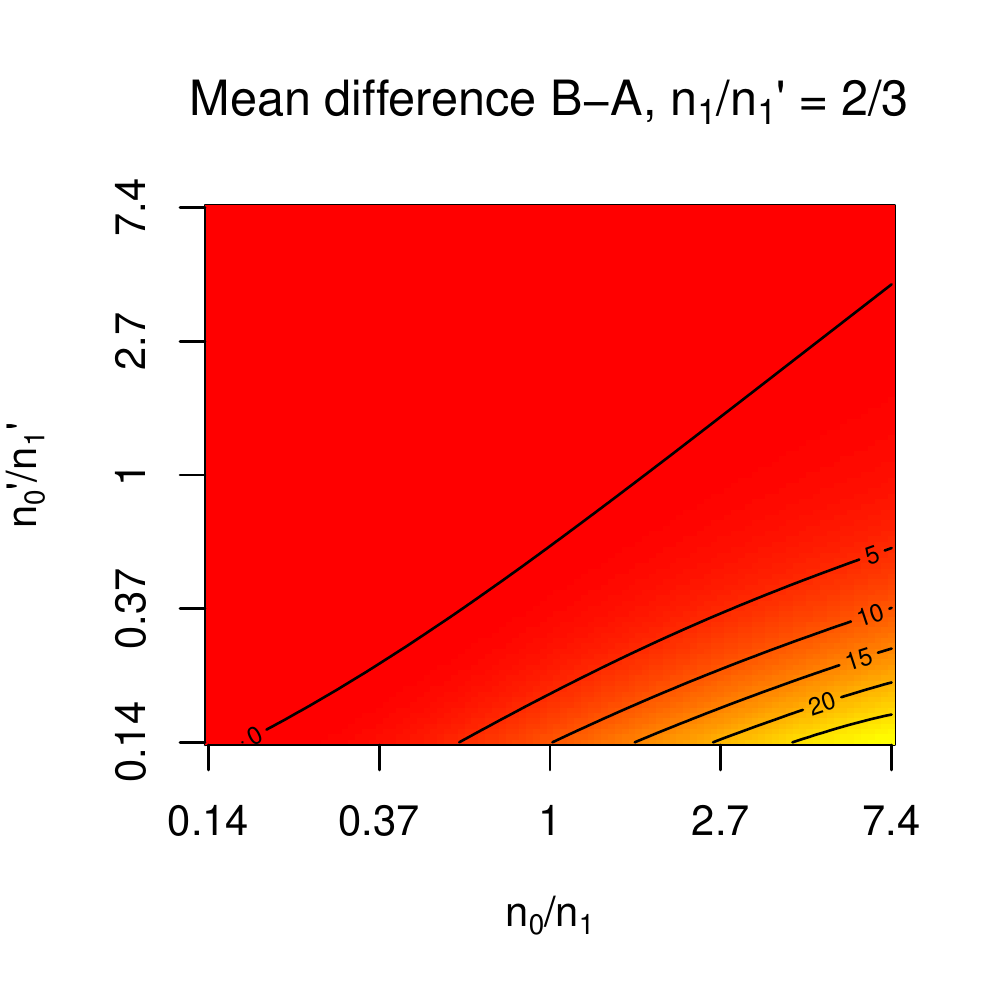}
\end{subfigure}
\begin{subfigure}[b]{0.4\textwidth}
\includegraphics[width=\textwidth]{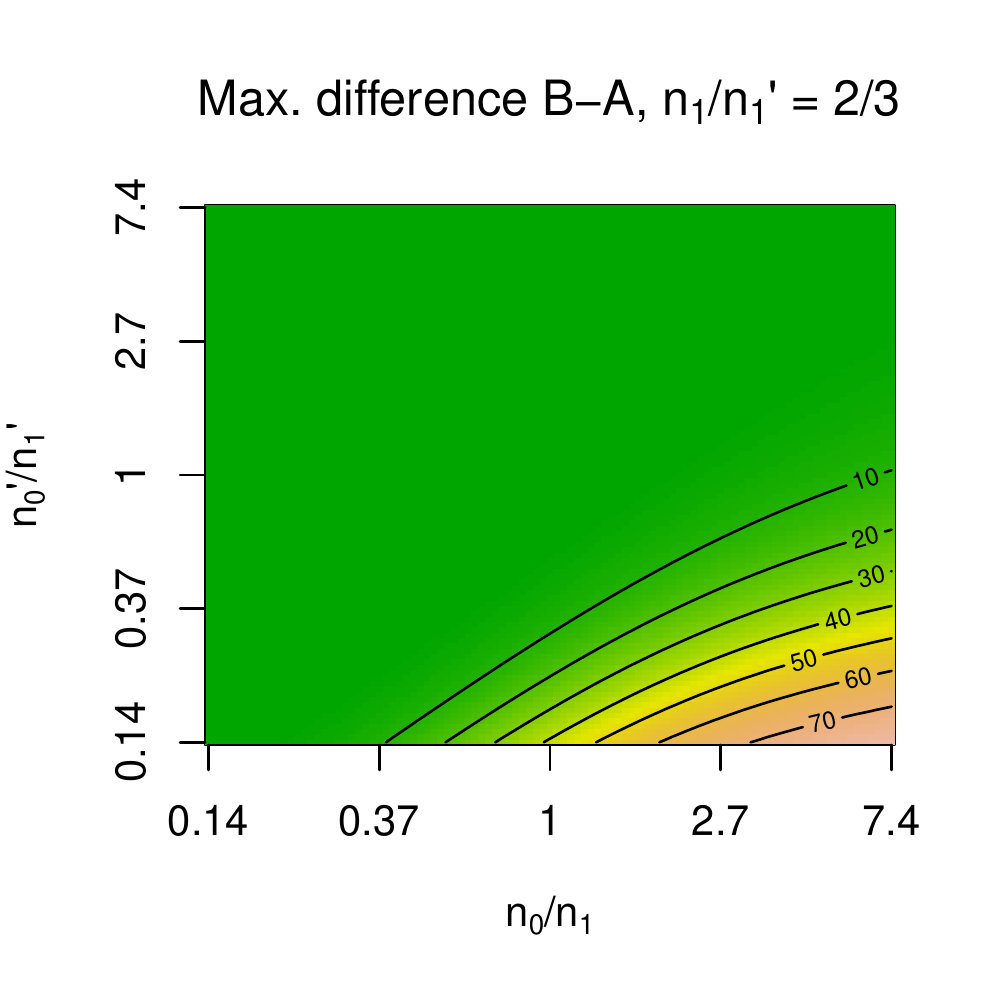}
\end{subfigure}
\end{subfigure}

\begin{subfigure}[b]{\textwidth}
\begin{subfigure}[b]{0.4\textwidth}
\includegraphics[width=\textwidth]{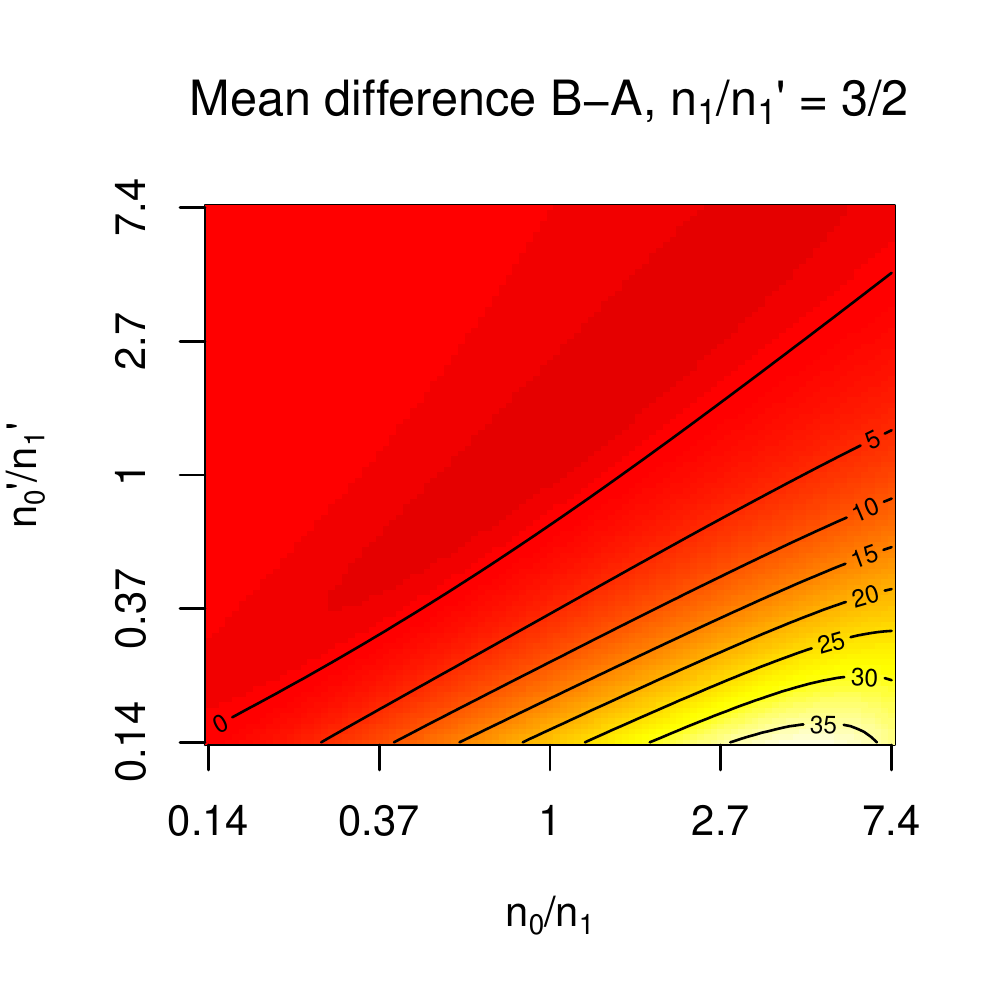}
\end{subfigure}
\begin{subfigure}[b]{0.4\textwidth}
\includegraphics[width=\textwidth]{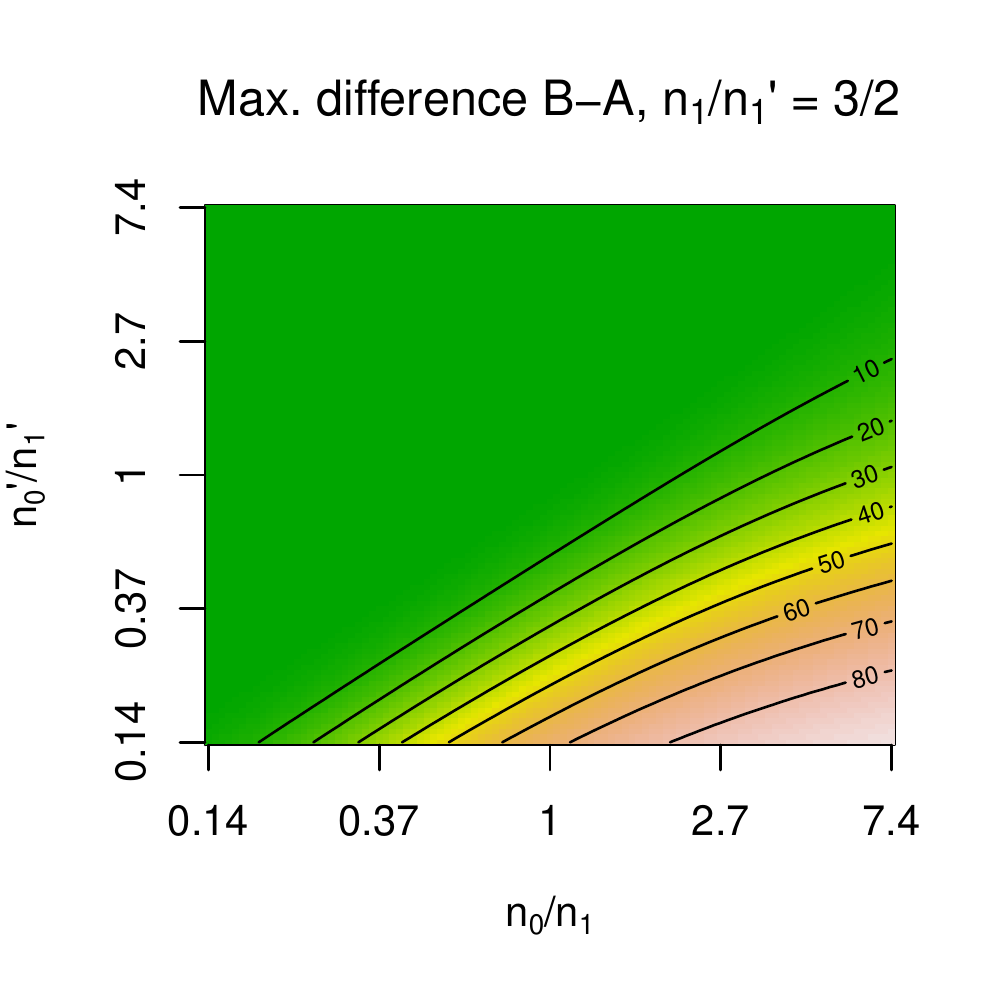}
\end{subfigure}
\end{subfigure}
\caption{General power differences (\%) between methods A and B. Mean power difference is taken as the integral of power difference between methods B and A (see methods section) over $\mathbb{R}$ with respect to log-odds ratio. In all cases, 20 000 samples are used overall for a SNP with MAF 0.1, with cutoffs $\alpha=5 \times 10^{-6}$, $\beta=5 \times 10^{-4}$, $\gamma=5 \times 10^{-8}$. }
\label{fig:power}
\end{figure}

\subsection*{Recommended applications}

To demonstrate areas where this approach is applicable, several examples are constructed or sourced from the GWAS field in which the procedure of sharing controls or cases will improve power or type-1 error profile of the two-stage testing procedure or enable some form of orthogonal replication to be performed.

\subsubsection*{Assumptions}

In order to use method B or C, it must be assumed that cohort $C_{0}$ and $C_{0}'$ are sampled from similar enough populations to be comparable to $C_{1}$ and $C_{1}'$ (possibly with the inclusion of strata or covariates in the genetic risk model). An important caveat of methods B and C is sacrifice of control over errors arising from aberrance in $C_{0}$ (method B) or $C_{0} \cup C_{0}'$ (method C), so an assumption must be made that variables affected by confounding or measurement error in these cohorts are understood to be distinguishable from true associations by quality-control measures only.

Post-hoc assessment of all putative hits should be performed to check for genotyping errors~\cite{anderson10} and assess whether the hit could have arisen from aberrance in $C_{0}$.


\subsubsection*{Conventional GWAS}

Method B is applicable in several cases in large conventional GWAS, particularly when then ratio of controls to cases in the discovery cohort is larger than that in the replication cohort. In a relatively recent GWAS on rheumatoid arthritis~\cite{stahl10} with comparable sample populations for discovery and replication cohorts, method B 
could be used to attain greater power than method A for a fixed type-1 error rate.  
Assuming that summary statistics are well-approximated by binomial tests of allelic differences (so covariates and strata used in computation of summary statistics have only small effects), the improvement in power is around 4\% for SNPs with an odds-ratio of $1.3$, MAF 0.1, and is positive across all odds ratios. More than 2000 additional controls in $C_0'$ would be needed to increase power by this amount (figure~\ref{fig:stahl}). 

Small power advantages such as this may make minimal difference in a single study, although since they require no extra cost, are worth attaining if possible. 
The power of method B is generally considerably higher than method A when $n_{0} > n_{1}$ and $n_{0}' \approx n_{1}'$, 
Power advantages may be more substantial in some cases; for example, a study with $(n_{0},n_{1},n_{0}',n_{1}')=(15 000,5000,5000,5000)$,
method B enables a power increase of up to 8\% (Figure~\ref{fig:demo1}). To achieve comparable performance with method A, around 2000 additional controls would be necessary in the replication cohort. Method B with $(n_{0},n_{0}')=(15000,5000)$ is also more powerful than method A would be if controls were divided equally between $C_{0}$ and $C_{0}'$ (see Figure~\ref{fig:demo1}). 




\subsubsection*{Difficult control ascertainment}

An important application of the method presented in this paper is in studies for which `control' samples are expensive or difficult to ascertain. This is often the case in comparative studies between disease subtypes. In such studies, sharing controls can improve power substantially, especially if a proportion of samples in the replication cohort are falsely assigned to the control cohort (see methods section).


An international GWAS on fronto-temporal dementia in 2014~\cite{ferrari14} is an example in which sharing controls may be beneficial. The study had sample sizes $(n_{0},n_{1},n_{0}',n_{1}')=(4308,2154,5094,1372)$. Control samples in the discovery phase were assessed for current neurological disease, and were used in previous studies on Parkinson's disease, indicating a high degree of reliability. Control samples in the replication phase were collected from the same geographic distribution as cases, but were not explicitly used in previous neurological studies, suggesting better control ascertainment amongst the discovery cohort. 

In this study, sharing controls could allow for a more strongly-ascertained control cohort, and reduce the effects of confounders affecting $C_{1}'$. At typical values $\alpha=1 \times 10^{-4}$, $\beta = 1 \times 10^{-3}$, $\gamma = 5 \times 10^{-8}$, power is nearly equivalent between the two methods assuming all controls are genuine. However, with 10\% misascertainment in $C_{1}'$, the power advantage of method B is up to 5\%.
Given the near-identical distribution of cases in the discovery and validation cohort, cases could alternatively be shared, leading to a power increase of up to 6\%.

\subsection*{Prospective study design}

Studies may be planned and powered with the assumption that samples may be shared. For certain restrictions on sample numbers, this can provide the potential for greater power than would be attainable by restricting to an independent-controls design. For instance, if we seek to validate hits on a GWAS with 10000 controls and 5000 cases, and can afford to genotype a further 10000 samples, power is higher after recruiting 4000 additional controls and 6000 additional cases and sharing controls than can be achieved from any independent-control study design (Figure~\ref{fig:demo3}).

This may be a common scenario if controls are sourced from a known database rather than specifically recruited for the study. 

\begin{figure}[p]
\centering
\begin{subfigure}[b]{0.45\textwidth}
\includegraphics[width=\textwidth] {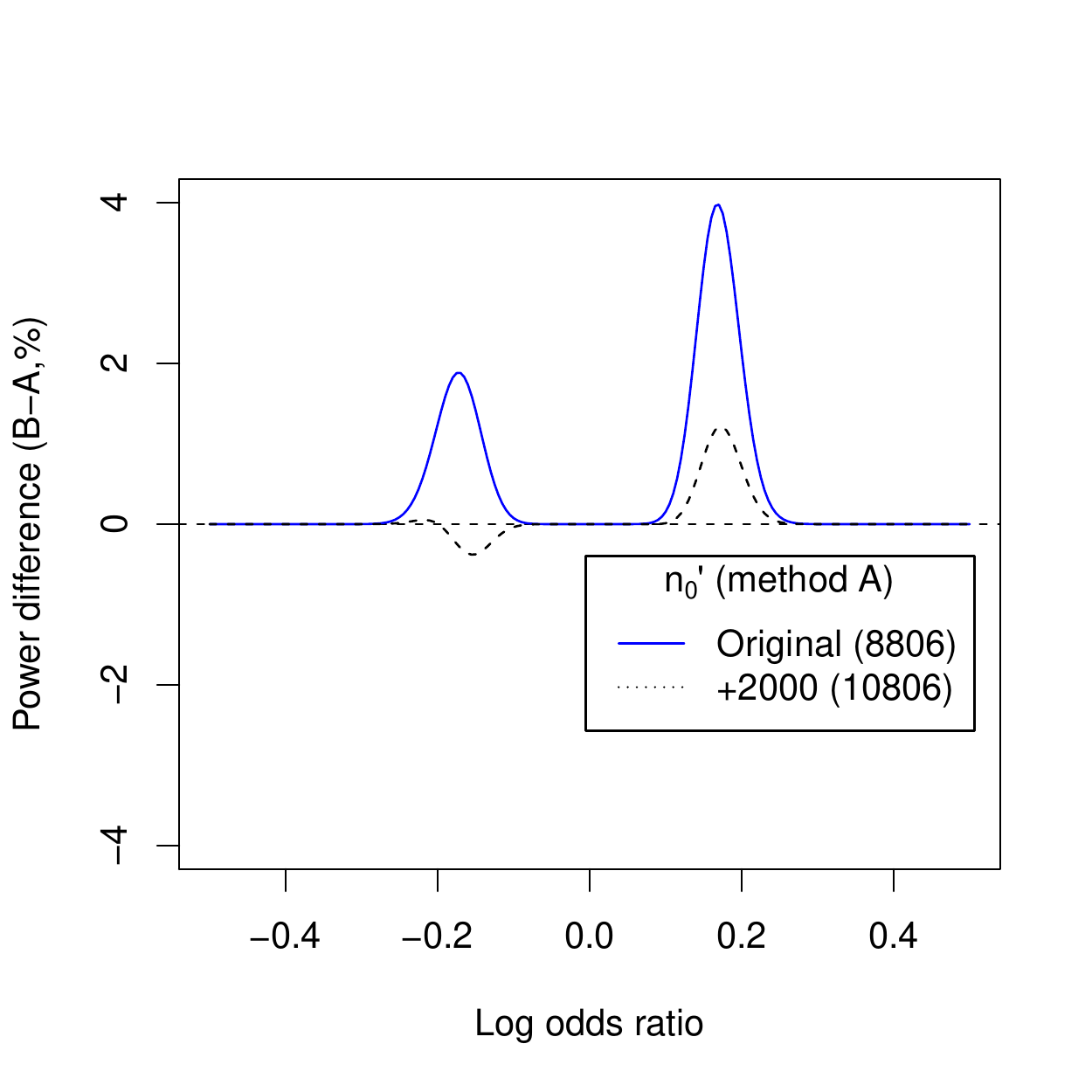}
\caption{Typical GWAS} \label{fig:stahl}
\end{subfigure}
\begin{subfigure}[b]{0.45\textwidth}
\includegraphics[width=\textwidth] {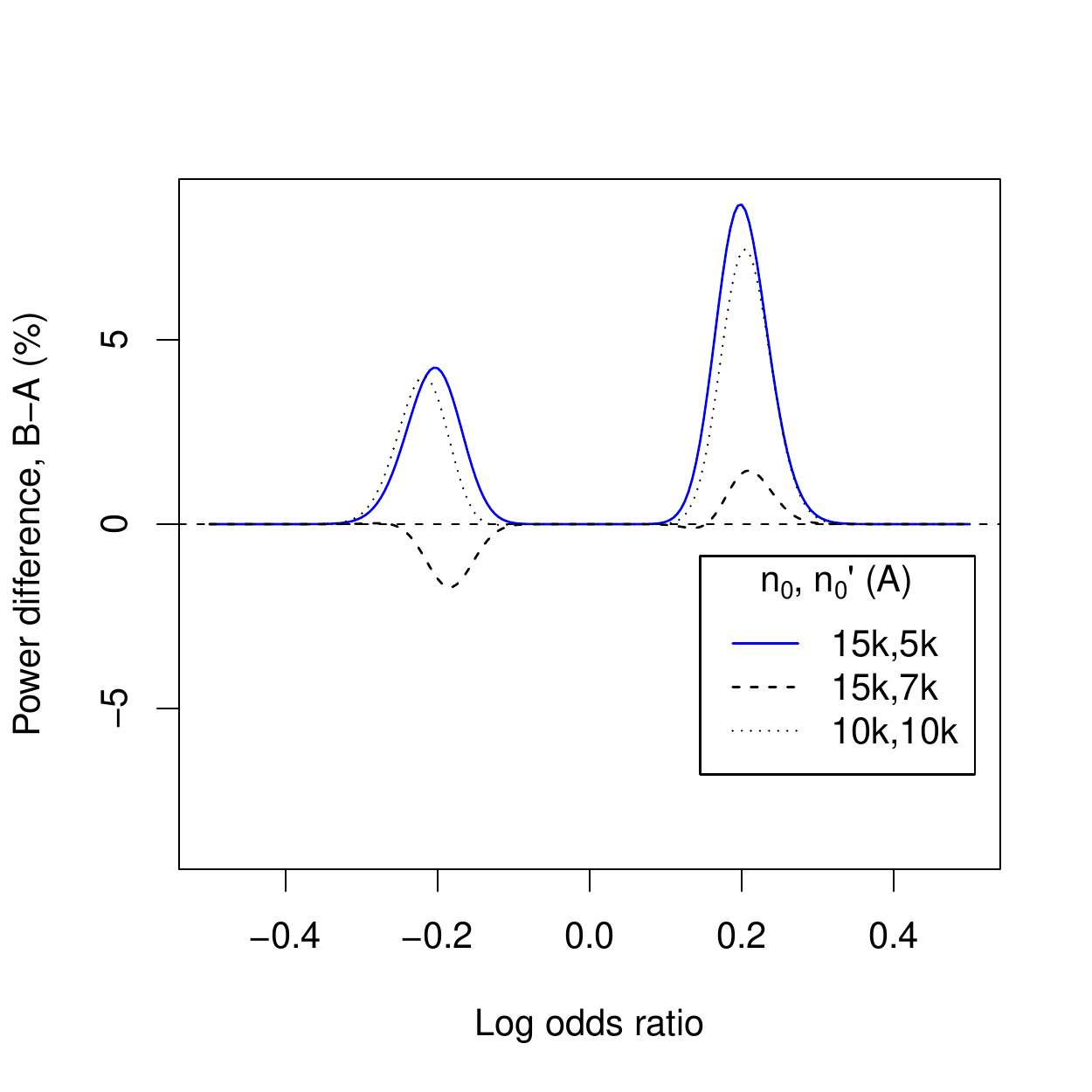}
\caption{Example with $n_{0} \gg n_{0}'$} \label{fig:demo1}
\end{subfigure}
\begin{subfigure}[b]{0.45\textwidth}
\includegraphics[width=\textwidth] {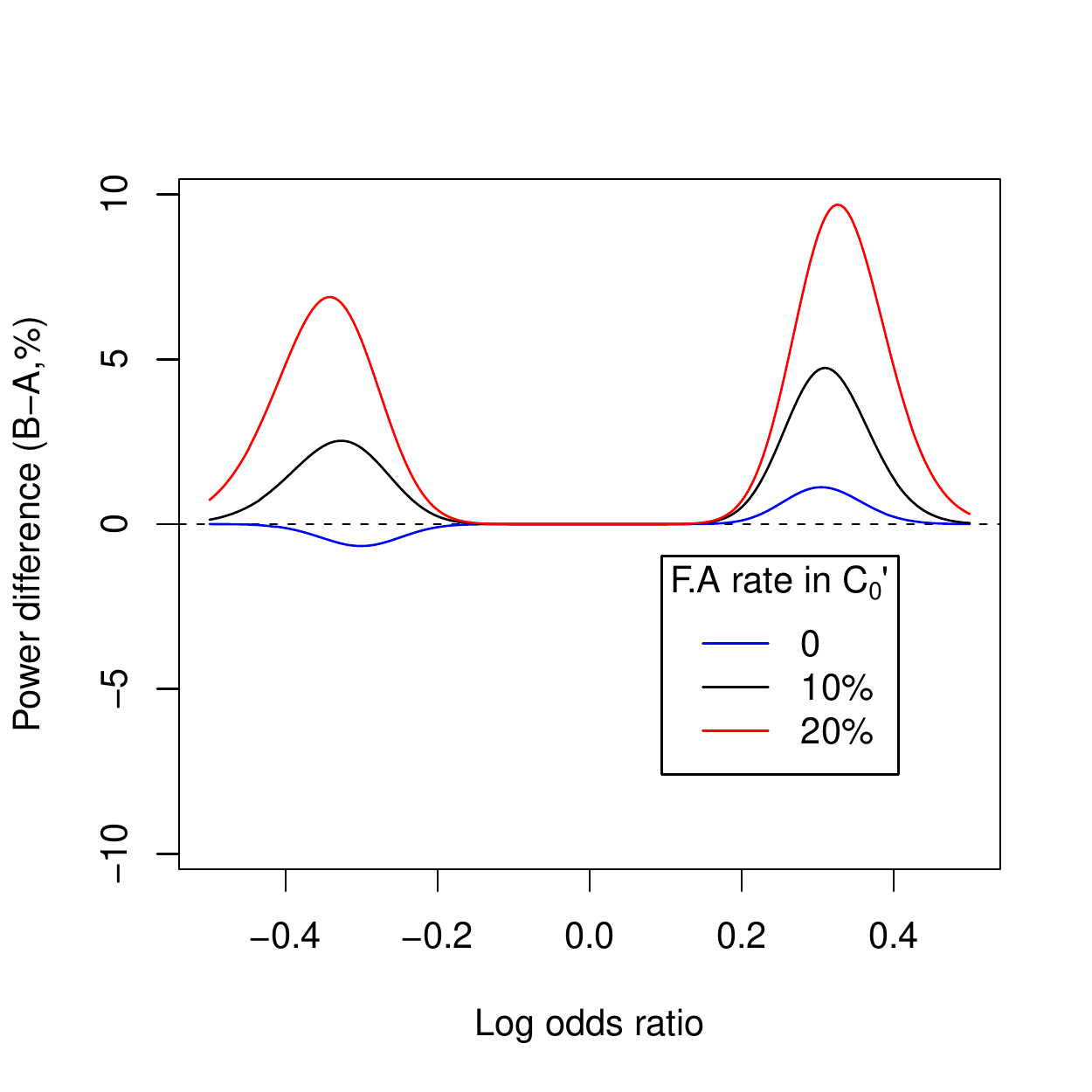}
\caption{Example with incorrect ascertainment} \label{fig:ferrari}
\end{subfigure}
\begin{subfigure}[b]{0.45\textwidth}
\includegraphics[width=\textwidth] {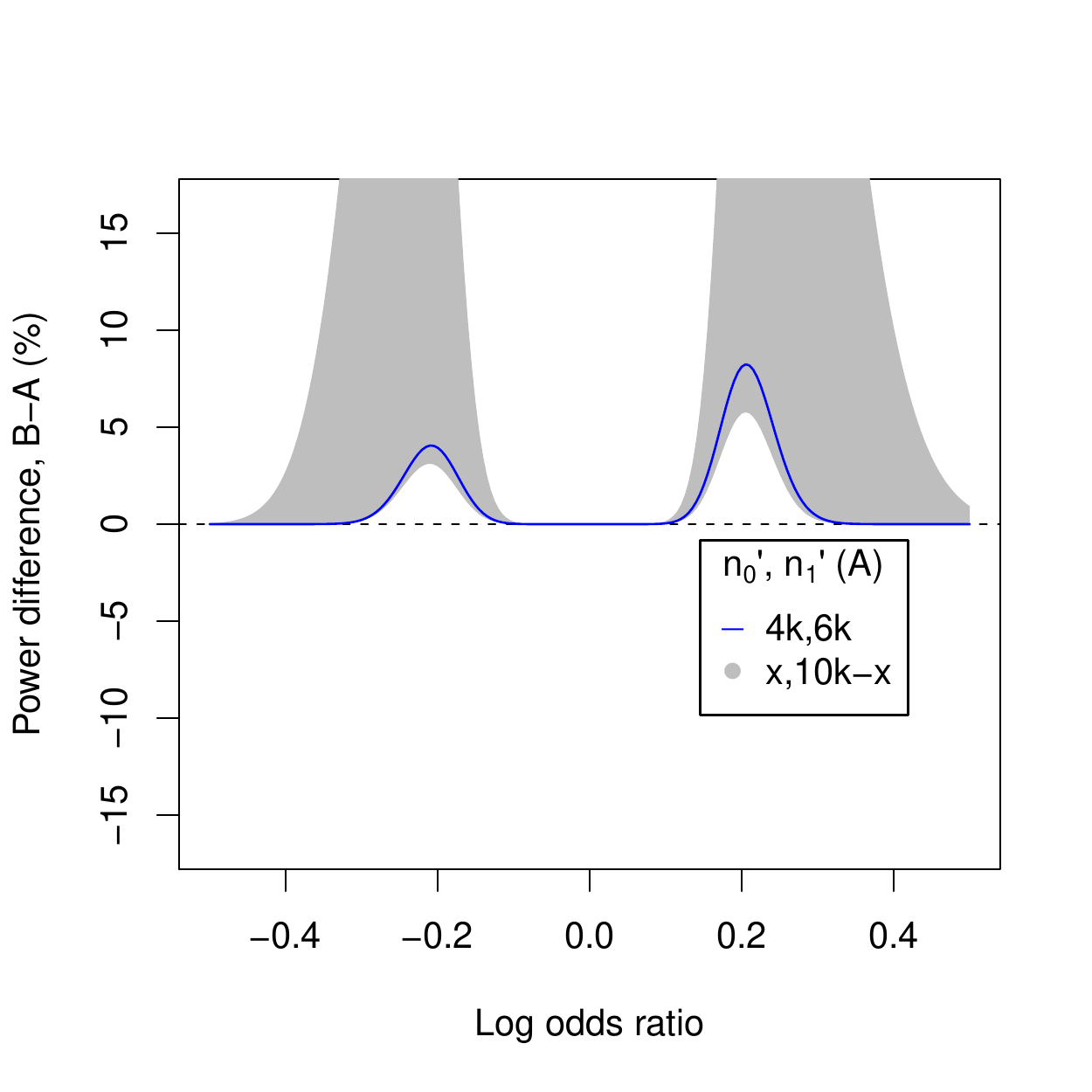}
\caption{Prospective design} \label{fig:demo3}
\end{subfigure}
\caption{
Examples of comparison of power of methods A and B. In all panels, a positive odds ratio corresponds to a deleterious mutation and average MAF is 10\%.
The top two panels show comparisons of method B with $n_0'$ fixed against method A with varying $n_0'$. Panel~\ref{fig:stahl} has $(n_0,n_1,n_0',n_1') = (20169, 5539,8806,6768)$ (values from a GWAS on RA~\cite{stahl10}), and panel~\ref{fig:demo1} $(n_0,n_1,n_0',n_1') = (15000, 5000,5000,5000)$. Both panels use  $(\alpha,\beta,\gamma)=(5 \times 10^{-6},5 \times 10^{-4}, 5 \times 10^{-8})$.
Panel \ref{fig:ferrari} demonstrates the effect of false-ascertainment (F.A) in $C_0'$; when cases are mis-ascertained as controls. In this case, $(\alpha,\beta,\gamma)=(1 \times 10^{-4},1 \times 10^{-3}, 5 \times 10^{-8})$, reflecting values used in the paper~\cite{ferrari14}.
Panel~\ref{fig:demo3} demonstrates a prospective scenario with 10000 samples for replication. Method B with $(n_{0},n_{1})$ as above, $(n_{0}',n_{1}')=(4000,6000)$ is more powerful than any design using method A (grey region; $n_{0}' \in (1000,9000)$; $n_{1}'=10000-n_{0}'$).
}
\label{fig:demos}
\end{figure}

\subsubsection*{Partial replication}



In circumstances where case recruitment is difficult, as in studies of rare diseases, an assessment of replicability may be made by re-testing results from a discovery phase with a new control set only. This can enable the use of control cohorts which only partially match the case cohort.

In a GWAS on pemphigus vulgaris~\cite{sarig12}, a rare disease primarily affecting individuals of Ashkenazi Jewish ethnicity, the discovery cohorts were sampled from Jewish populations, with age- and population- matched controls. Control cohorts were small ($(n_{0},n_{1},n_{0}',n_{1}')=(100,400,59,285)$), potentially due to difficulty recruiting both ethnically- and geographically-matched controls. 

Method C could be used in this instance to enable a larger control set and greater power. If a control cohort of Ashkenazi individuals could be assembled without requiring geographic matching with the case set, it would be inappropriate to use as a sole control cohort against the existing case cohort, due to the potential for geographic confounding. However, such a cohort could be used as either $C_0$ or $C_0'$ in method C, with the existing ethnically- and geographically- matched controls serving as the other cohort. In this way, the power advantage of the larger cohort could be used while maintaining control over potential aberrance in the larger control group.

Method C enables computation of power and type-1 error rates, and comparison to alternative designs with cases split into smaller independent discovery and validation cohorts (method A). Testing a case cohort against two separate control cohorts is almost always more powerful for a fixed type-1 error rate than splitting the case cohort in two and performing method A (see supplementary figures~\ref{fig:powerAC},\ref{fig:powerBC}).




\section*{Discussion}

This paper proposes a method to improve efficiency of data use in a replication procedure, adding to the body of methods for comparison of high-dimensional case-control studies. For many common study sizes, the method can reduce the cost of replication, or increase power of discovery. The adapted method is simple to apply, only requiring modification of a single association threshold. A standard replication procedure (or more general comparison of case-control studies) with independent control datasets does not make use of the information that expected values of variables in control datasets are, in principle, the same. In this way, the same dataset can in theory yield more information when controls are shared.

The most important caveat of these methods is the loss of systematic type-1 error rate control for null SNPs which are aberrant in $C_{0}$. Control of such errors must not be sacrificed entirely, but in some circumstances it may be satisfactory to assess such errors on a SNP-by-SNP basis. Such assessment is important and standard for all proposed GWAS hits under any method~\cite{wtccc07} in the interests of quality control. In method C, control over aberrance in $C_0'$ is additionally lost; however, since this method is largely applicable when $C_0 \cup C_0'$ is a single homogenous control (or case) cohort, there is no way that aberrance in the cohort can be systematically identified by comparison with other cohorts.

Somewhat better control of the type-1 error rate can often be achieved for SNPs with aberrance in $C_{1}$ or $C_{0}'$. This may incentivise the use of this method when confidence in the representativeness of these cohorts is low compared to that of $C_{0}$. The type 1 error rate is somewhat increased for SNPs with aberrance in $C_{1}'$, although as it remains bounded by $\alpha$, this increase is not a major problem.

The two-stage validation procedure is similar to a meta-analysis of the discovery and validation experiments, for which several adaptations to shared-control designs have been proposed~\cite{lin09,han16}. However, there are several important distinctions which necessitate an alternative approach in this case. Firstly, not all variables are measured in the second (replication) study; we are restricted to analysis of variables reaching a given observed effect size. Secondly, the studies to be `meta-analysed' are not complete, in the sense that there may be residual confounding; a strong effect size in the meta-analysis alone is not adequate evidence for association and some level of association (with consistent direction) is additionally required in both constituent studies. 

The method is inapplicable when replication is performed on cohorts from completely distinct geographic groups, although there can be some difference in geographic distribution between control sets if this is controlled for in computing summary statistics. The method is most applicable when control groups are sampled from similar populations and genotyped on similar platforms.

The widespread discoveries of the GWAS field have led to corresponding increases in complexity of phenotypic definitions, with ever-finer delineations of disease types of ever-rarer prevalence. The genetic analysis of such complex phenotypes is necessarily comparative; there is little use understanding the genetics of a rare disease subtype except in the context of the genetics of the disease in general. Such analyses necessitate GWAS and other comparative studies between rare phenotypic types~\cite{liley16}, with `controls' meaning the better-characterised disease subphenotype in this sense, as well as between cases and controls. Rare disease subtypes are often afflicted with ascertainment difficulties, leading to varying degrees of expected aberrance in disease cohorts. Within this paradigm, the applicability of this method is likely to expand.

\section*{Acknowledgments}

I would like to thank Dr Chris Wallace and Dr Jenn Asimit for helpful comments and review of this work.
JL is funded by the NIHR Cambridge Biomedical Research Centre and is on the Wellcome Trust PhD programme in Mathematical Genomics and Medicine at the University of Cambridge. He is also supported by the Wallace group, funded by the Wellcome Trust (089989,107881, CW) and the MRC (MC\_UP\_1302/5, CW). 
The funders had no role in study design, data collection and analysis, decision to publish, or preparation of the manuscript. 

\section*{Conflicts of interest}

None declared

\clearpage

\section*{Methods}

\subsection*{Definitions}

Denote $z_{x}$ for $x \in \{d,r,s,m,c \} $ as the signed z-score ($\pm \Phi^{-1}(p_{x}/2)$) corresponding to $p_{x}$, and $z_{x}$ for $x \in {\alpha,\beta,\beta^*,\gamma}$ as the positive corresponding threshold $-\Phi^{-1}(x/2)$, where $\Phi, \Phi^{-1}$ are the standard normal CDF and quantile functions. 
%
%
Other than $(z_{d},z_{r})$, all pairs of z-scores are correlated under $H_0^\cap$, with correlation estimable from sample sizes or empirically if covariates are used (Appendix~\ref{apx:zcor}). Denote $\rho_{xy}$ as the correlation between $z_{x}$ and $z_{y}$, $(x,y) \in \{d,r,s,m\}^{2}$, and set
%
%
\begin{equation}
\Sigma_{A} = var \left(  ( z_{d} \, z_{r} \, z_{m} )^t \right) \hspace{40pt}
\Sigma_{B} = var \left( ( z_{d} \, z_{s} \, z_{m} )^t \right) \hspace{40pt}
\Sigma_{C} = var \left( ( z_{c} \, z_{s} \, z_{m} )^t \right) \label{eq:sigcdef}
\end{equation}
For $i \in \{ d,r,s,m,c \}$ define $\zeta_{i}=E(z_{i})$, where the expectation is conditional on the SNP in question. For SNPs in $H_{0}^{\cap}$, $\zeta_{i} \equiv 0$, but this may not hold for SNPs in $H_{0}^{\cup} \setminus H_{0}^{\cap}$. In theoretical working, aberrance in groups is characterised by values $\zeta_{i}$ rather than log-odds ratios, noting that the values $\zeta_{i}$ are asymptotically proportional to the corresponding log-odds ratios. Define $R_{A}$, $R_{B}$, $R_C$ as the false-positive rates for a SNP of interest in methods A, B and C respectively. 

\subsection*{General type 1 error rate}

The values $\beta^*$, $\beta^\perp$ are chosen to satisfy 
\begin{align}
2\int_{z_{\alpha}}^{\infty} \int_{z_{\beta^*}}^{\infty} \int_{z_{\gamma}}^{\infty} N_{\Sigma_{B}}  \left( \begin{smallmatrix} z_{d} \\ z_{r} \\ z_{m} \end{smallmatrix} \right) dz_{m} dz_{r} dz_{d} &= 2\int_{z_{\alpha}}^{\infty} \int_{z_{\beta^\perp}}^{\infty} \int_{z_{\gamma}}^{\infty} N_{\Sigma_{C}}  \left( \begin{smallmatrix} z_{d} \\ z_{r} \\ z_{m} \end{smallmatrix} \right) dz_{m} dz_{r} dz_{d} \nonumber \\ 
&= 2\int_{z_{\alpha}}^{\infty} \int_{z_{\beta}}^{\infty} \int_{z_{\gamma}}^{\infty} N_{\Sigma_{A}}  \left( \begin{smallmatrix} z_{d} \\ z_{r} \\ z_{m} \end{smallmatrix} \right) dz_{m} dz_{r} dz_{d} \nonumber \\
&= Pr(p_{d}<\alpha,p_{r}<\beta,p_{m}<\gamma|H_{0}^{\cap}) \label{eq:betastardef} 
\end{align}
thus conserving the type 1 error rate (denoted $P_0$) against $H_{0}^{\cap}$ between methods (Figure~\ref{fig:setup}). If no threshold is used on $p_m$ (ie, $\gamma=1$), then $\beta^*$, $\beta^\perp$ satisfy
\begin{equation}
Pr(p_{d}<\alpha,p_{s}<\beta^*|H_{0}^{\cap}) = Pr(p_{c}<\alpha,p_{s}<\beta^\perp|H_{0}^{\cap}) =Pr(p_{d}<\alpha,p_{r}<\beta|H_{0}^{\cap}) = \alpha \beta \label{eq:betastardef2}
\end{equation}
since $z_{d} \ci z_{r}|H_{0}^{\cap}$. Definition~\ref{eq:betastardef} will be considered a generalisation of definition~\ref{eq:betastardef2}, with results established first for $\beta^*$ as per definition~\ref{eq:betastardef2} and extending where possible to definition~\ref{eq:betastardef}.

For $\beta^*$ defined as per definition~\ref{eq:betastardef2} we have (see Appendix~\ref{apx:infbetastar})
\begin{equation}
\lim_{z_{\alpha} \to \infty} \frac{z_{\beta^{*}}}{\sqrt{1-\rho_{ds}^{2}}z_{\beta} + \rho_{ds} z_{\alpha}} = 1  \hspace{40pt} \lim_{z_{\alpha} \to \infty} \frac{z_{\beta^\perp}}{\sqrt{1-\rho_{cs}^{2}}z_{\beta} + \rho_{cs} z_{\alpha}} = 1
\end{equation}
approaching from above, so $z_{\beta^{*}} > max \left( z_{\beta},\sqrt{1-\rho_{ds}^{2}}z_{\beta} + \rho_{ds} z_{\alpha} \right)$, $z_{\beta^{\perp}} > max \left( z_{\beta},\sqrt{1-\rho_{cs}^{2}}z_{\beta} + \rho_{cs} z_{\alpha} \right)$. As defined by equation~\ref{eq:betastardef2}, $z_{\beta^*}$, $z_{\beta^\perp}$ are also  asymptotically linear in $z_{\alpha}$, $z_{\gamma}$, $z_{\beta}$ as the former two tend to $\infty$, with some constraints (Appendix~\ref{apx:infbetastar}), although the limit does not necessarily approach from above. For both definitions, $\beta^\perp < \beta^*<\beta$ (Appendix~\ref{apx:betabound})

\begin{figure}[p]
\begin{center}
\includegraphics[width=0.5\textwidth] {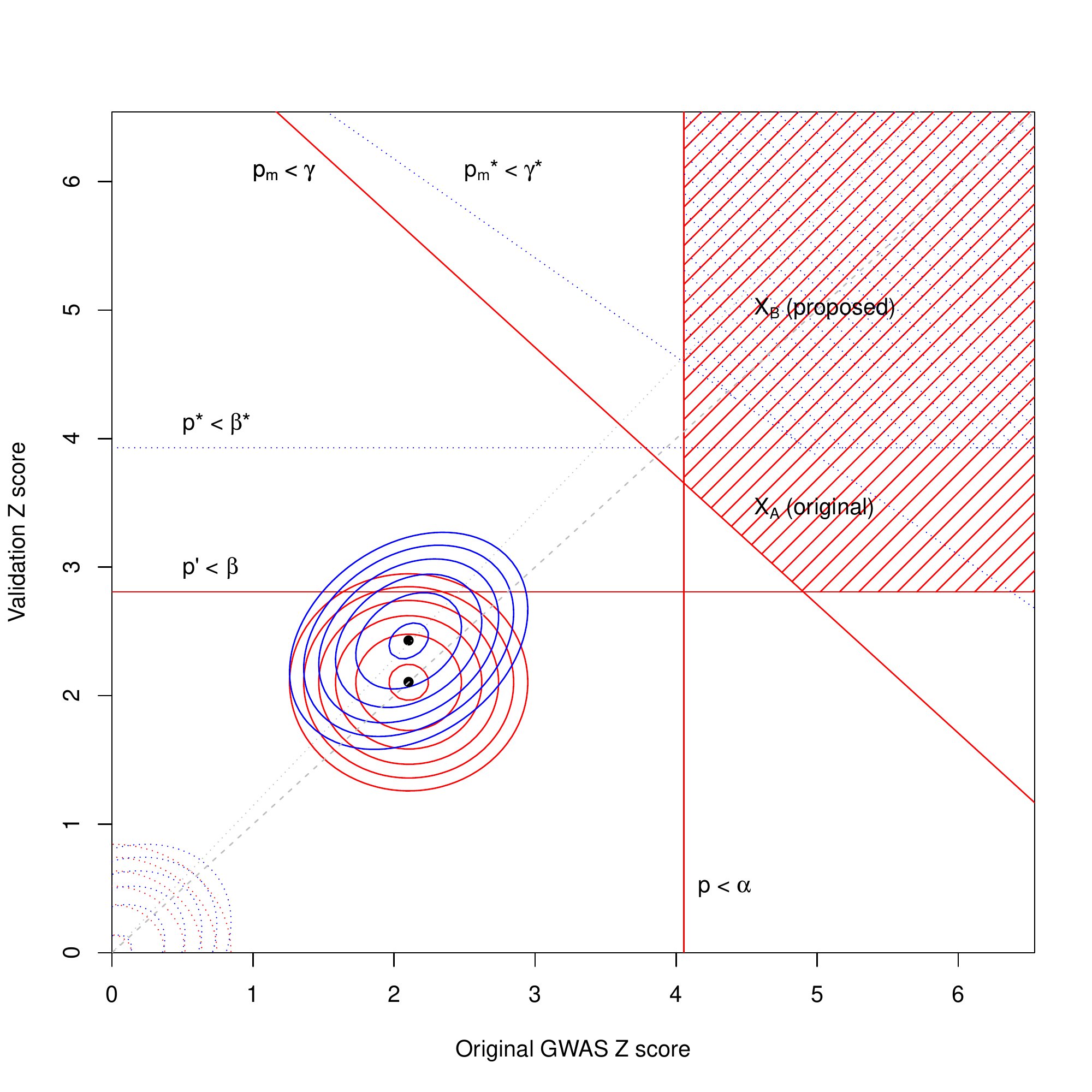}
\end{center}
\caption{
Replication with shared controls. Red and blue shaded areas are regions where a pair of observed $Z$ scores are deemed a `hit' in the $(+,+)$ quadrant under method A/B  respectively. The value $z_{m}$ is almost linearly dependent on $(z_{d}, z_{r})$ and on $(z_{d},z_{s})$ (Appendix~\ref{apx:zcor}). Solid red/blue ellipses indicate contours of the distribution of observed $Z$ scores for a typical non-null SNP under methods A and B, and dashed ellipses indicate contours for a null SNP. 
}
 \label{fig:setup}
\end{figure}

\subsection*{Study sizes, odds ratios and allele frequencies}

Consider a study with $n_{0}$ controls and $n_{1}$ cases, with underlying allele frequencies $\mu_{0}$ and $\mu_{1}$ in cases and observed allele frequencies $m_{0}$, $m_{1}$. Let $Z$ be a signed Z-score derived from a GWAS p-value against the null hypothesis $\mu_{0}=\mu_{1}$. To first order, 
\begin{equation}
E(Z) = \sqrt{\frac{2 n_{0} n_{1}}{n_{0}+n_{1}}} \frac{\mu_{1}-\mu_{0}}{\sqrt{\bar{\mu}(1-\bar{\mu})}}
\end{equation}
where $\bar{\mu} = \frac{n_{0}\mu_{0}+n_{1} \mu_{1}}{n_{0}+n_{1}}$. Hence
%
%
\begin{eqnarray}
&\zeta_{d} = \sqrt{\frac{2 n_{0}n_{1}}{n_{0}+n_{1}}}\frac{\mu_{1}-\mu_{0}}{\sqrt{\bar{\mu}(1-\bar{\mu})}} \hspace{20pt}
\zeta_{r} = \sqrt{\frac{2 n_{0}'n_{1}'}{n_{0}'+n_{1}'}}\frac{\mu_{1}'-\mu_{0}'}{\sqrt{\bar{\mu}(1-\bar{\mu})}} \nonumber \\
&\zeta_{s} = \sqrt{\frac{2 (n_{0}+n_{0}')n_{1}'}{n_{0}+n_{0}'+n_{1}'}}\frac{\mu_{1}'-\frac{\mu_{0}n_{0}+\mu_{0}'n_{0}'}{n_{0}+n_{0}'}}{\sqrt{\bar{\mu}(1-\bar{\mu})}} \hspace{20pt}
\zeta_{c} = \sqrt{\frac{2 (n_{0}+n_0')n_{1}}{n_{0}+n_0'+n_{1}}}\frac{\mu_{1}-\frac{\mu_{0}n_{0}+\mu_{0}'n_{0}'}{n_{0}+n_{0}'}} {\sqrt{\bar{\mu}(1-\bar{\mu})}} \nonumber \\
&\zeta_{m} = \sqrt{\frac{2 (n_{0}+n_{0}')(n_{1}+n_{1}')}{n_{0}+n_{0}'+n_{1}+n_{1}'}}\frac{\frac{\mu_{1}n_{1}+\mu_{1}'n_{1}'}{n_{1}+n_{1}'}-\frac{\mu_{0}n_{0}+\mu_{0}'n_{0}'}{n_{0}+n_{0}'}}{\sqrt{\bar{\mu}(1-\bar{\mu})}} \label{eq:zetadef}
\end{eqnarray}
where $\bar{\mu}$ varies between definitions (though it is sometimes taken to be approximately equal). These formulae allow $\zeta_{i}$ to be estimated in empirical computations. Estimation of $\zeta_i$ is more complex if covariates or strata are used in the computation of $z_i$ (appendix~\ref{apx:zcor}).

In all empirical computations, systematic allelic differences (whether due to aberrance or true association) are characterised by odds ratios and minor allele frequency rather than $\zeta_{i}$. Values $\mu_{0}$ and $\mu_{1}$ (in general terms) are readily computable from odds-ratios  $R=\frac{\mu_{1}(1-\mu_{0})}{\mu_{0}(1-\mu_{1})}$ and minor allele frequency $\mu=\frac{\mu_{0}+\mu_{1}}{2}$. The use of this characterisation of minor allele frequency (rather than the `weighted' characterisation $\frac{n_{0}\mu_{0}+n_{1}\mu_{1}}{n_{0}+n_{1}}$) is so that the correspondence between $(R,\mu)$ and $(\mu_{0},\mu_{1})$ is independent of $n_{0},n_{1}$; for instance, for a disease-associated variant satisfying $\mu_{1}=\mu_{1}'$, $\mu_{0}=\mu_{0}'$ will have identical odds ratios $\frac{\mu_{1}(1-\mu_{0})}{\mu_{0}(1-\mu_{1})}$ and $\frac{\mu_{1}'(1-\mu_{0}')}{\mu_{0}'(1-\mu_{1}')}$ between the discovery and validation experiments, but generally different weighted MAFs $\frac{n_{0}\mu_{0}+n_{1}\mu_{1}}{n_{0}+n_{1}}$ and $\frac{n_{0}'\mu_{0}'+n_{1}'\mu_{1}'}{n_{0}'+n_{1}'}$.

\subsubsection*{False ascertainment}

In general, for a true association, $\mu_{0}=\mu_{0}'$ and $\mu_{1}=\mu_{1}'$. If some proportion $\kappa$ of samples in $C_{0}'$ are incorrectly assigned and come from the case population, then $\mu_{0}' = (1-\kappa)\mu_{0} + \kappa \mu_{1}$. This lowers the absolute values of $\zeta_{r}$, $\zeta_{s}$ and $\zeta_{m}$, reducing the power to detect the SNP. 

\subsection*{Empirical computations}
\label{sec:empirical}

Define $N_{\Sigma}(\mathbf{z})$ as the $pdf$ of the multivariate normal with mean $0$ and variance $\Sigma$ at $\mathbf{z}$. Determination of covariance is described in Appendix~\ref{apx:zcor}.
Given $\zeta_{d}$, $\zeta_{r}$, $\zeta_{s}$, $\zeta_{m}$, the probability of rejecting the null for a given SNP using method A is %
\begin{align}
&\int_{z_{\alpha}-\zeta_{d}}^{\infty} \int_{z_{\beta} - \zeta_{r}}^{\infty} \int_{z_{\gamma} - \zeta_{m}}^{\infty} N_{\Sigma_{A}} \left( (z_{d} \, z_{r} \, z_{m} )^t\right) dz_{m} dz_{r} dz_{d} \nonumber \\ 
&+ \int_{z_{\alpha}+\zeta_{d}}^{\infty} \int_{z_{\beta} + \zeta_{r}}^{\infty} \int_{z_{\gamma} + \zeta_{m}}^{\infty} N_{\Sigma_{A}} \left( (z_{d} \, z_{r} \, z_{m} )^t\right) dz_{m} dz_{r} dz_{d}
\end{align}
and using method B
\begin{align}
&\int_{z_{\alpha}-\zeta_{d}}^{\infty} \int_{z_{\beta} - \zeta_{s}}^{\infty} \int_{z_{\gamma} - \zeta_{m}}^{\infty} N_{\Sigma_{B}} \left( (z_{d} \, z_{s} \, z_{m} )^t\right) dz_{m} dz_{s} dz_{d} \nonumber \\ 
&+ \int_{z_{\alpha}+\zeta_{d}}^{\infty} \int_{z_{\beta} + \zeta_{s}}^{\infty} \int_{z_{\gamma} + \zeta_{m}}^{\infty} N_{\Sigma_{B}} \left( (z_{d} \, z_{s} \, z_{m} )^t\right) dz_{m} dz_{s} dz_{d}
\end{align}
If $\frac{n_{0}}{n_{1}}=\frac{n_{0}'}{n_{1}'}$, matrix $\Sigma_{A}$ is singular (Appendix~\ref{apx:zcor}), in which case $z_{m}=\rho_{dm} z_{d} + \rho_{vm}z_{v}$ and the expression above may be reduced to a two-dimensional integral over a more complex region (Figure~\ref{fig:setup}). Matrix $\Sigma_{C}$ is generally singular, so the formula $z_{m} = \frac{\rho_{cs}\rho_{sm}-\rho_{cm}}{\rho_{cs}^2 - 1} z_{d} + \frac{\rho_{cs}\rho_{cm} - \rho_{sm}}{\rho_{cs}^{2} - 1} z_{s}$ 
is used to reduce the integral in a similar way. A similar formula may be used if $\Sigma_B$ is nearly singular.


In Figure~\ref{fig:power}, mean power difference is determined as the integral of the power difference with respect to the log-odds ratio over the real line.  

\subsection*{Type 1 error rates}
\subsubsection*{Aberrance in $C_{1}$}

For SNPs aberrant in only $C_{1}$ we have $\zeta_{d} \neq 0$, $\zeta_{c} \neq 0$, $\zeta_m \neq 0$, and $\zeta_{r}=\zeta_{s}=0$.

$R_{A}$, $R_{B}$, $R_{C}$ can be considered as functions of $\zeta_{d}$. As $\zeta_{d} \to 0$, $R_{A},R_{B},R_{C} \to P_{0}$ (equation~\ref{eq:betastardef}). As $\zeta_{d} \to \pm \infty$, $R_{A} \to \frac{\beta}{2}$, $R_{B} = \frac{\beta^{*}}{2}$ and $R_{C} = \frac{\beta^\perp}{2}$. For positive $\zeta_{d}$ both $R_{A}$ and $R_{B}$ are increasing (and both are symmetric in $\zeta_{d}$) so $R_{A}<\frac{\beta}{2}$,  $R_{B} < \frac{\beta^{*}}{2}$, $R_{C} < \frac{\beta^{\perp}}{2}$ for all $\zeta_{d}$.

Since $\beta^\perp < \beta^* < \beta$ (often substantially), methods B and C are generally better at rejecting $H_{0}^{\cap}$ for such SNPs. In the simplified case where $z_{\gamma}=1$, $R_{A} \geq R_{B}$ universally (Appendix~\ref{apx:abc1}. This typically holds for all $z_{\gamma}$, except for small deviations in pathological cases.

In general, we consider aberrance which is only still present after any strata or covariates have been accounted for in the computation of $z$ scores. If strata or covariates remove the effective aberrance between groups, the type-1 error rate is equivalent to that under $H_{0}^{\cap}$.

\subsubsection*{Aberrance in $C_{1}'$}

For SNPs aberrant in $C_{1}'$, we have $\zeta_{d}=0$, $\zeta_{c}=0$, $\zeta_{r} \neq 0$, $\zeta_{s} \neq 0$ and $\zeta_{m} \neq 0$. 

Again, $R_{A},R_{B},R_{C} \to P_{0}$ as $\zeta_{r} \to 0$. As $\zeta_{r} \to \pm \infty$, $R_{A},R_{B},R_{C} \to \frac{\alpha}{2}$, and both are bounded by $\frac{\alpha}{2}$. Although $R_{B}$ and $R_{C}$ are typically higher than $R_{A}$ in this case, since both have the same (typically conservative) upper bound, this is not typically a large sacrifice in type 1 error. 

In the simplified case where $\gamma =1$, an approximate upper bound on $R_{B}-R_{A}$ is given by (Appendix~\ref{apx:c1vaberrance})
\begin{equation}
\frac{\alpha}{2 \sqrt{2 \pi}} \left(\frac{k}{\sqrt{1-\rho^{2}}} - 1 \right) z_{\beta} \ll \frac{\alpha}{2} \label{eq:c1v_aberrance}
\end{equation}
where
\begin{equation}
 k =\frac{\zeta_{s}}{\zeta_{r}} \approx \sqrt{\frac{(n_{0}+n_{0}')(n_{0}'+n_{1}')}{n_0'(n_0+n_0'+n_1')}}
\end{equation}
In practice, there is typically a similarly small difference between $R_{C}$, $R_{B}$ and $R_{A}$ in the general case.


\subsubsection*{Aberrance in $C_{0}'$}

For SNPs aberrant in $C_{0}'$, $\zeta_{d}=0$, $\zeta_{r} \neq 0$, $\zeta_{c} \neq 0$, $\zeta_{s} \neq 0$ and $\zeta_{m} \neq 0$. As for SNPs with aberrance in $C_{1}'$, $R_{A},R_{B},R_{C} \to P_{0}$ as $\zeta_{r} \to 0$ and as $\zeta_{r} \to \pm \infty$, $R_{A},R_{B} \to \frac{\alpha}{2}$, both bounded above by $\frac{\alpha}{2}$. $R_{C}$, however, tends to 1 as $\zeta_{d} \to \infty$. 

In method B the cohort $C_{0}$ has a correcting effect on the replication study, meaning $|\zeta_{s}|<|\zeta_{r}|$ and $R_{B}<R_{A}$. 

For the simplified case where $\gamma=1$, a similar bound to ~\ref{eq:c1v_aberrance} holds for the difference $R_{A}-R_{B}$ (note signs are reversed) with 
\begin{equation}
k'=\frac{\zeta_{s}}{\zeta_{r}} \approx \sqrt{\frac{n_{0}'(n_{0}' + n_{1}')}{(n_{0}+n_{0}')(n_{0}+n_{0}'+n_{1}')}}
\end{equation}
in the place of $k$. The improvement in type-1 error rate for a SNP with aberrance in $C_{0}'$ is generally larger than the loss with the same aberrance in $C_{1}'$ (see methods), meaning that if aberrances are of similar prevalence and size in $C_{1}'$ and $C_{0}'$, method B will typically have a lower type-1 error rate than method A.

\subsubsection*{Aberrance in $C_{0}$}

Aberrance in $C_{0}$ represents a serious problem in case-control study comparison. False-positive rates are generally worse under method B, and tend to 1 as $E(z) \to \infty$. If aberrances of this type are expected to be very frequent, this may preclude use of methods B or C.

However, aberrances of this type may be best detected retrospectively by examining aberrances between control groups at SNPs declared `hits'. This procedure is already a necessary quality-control procedure in method A~\cite{wtccc07,anderson10}, as method A does not provide any control over differences between $C_{0}$ and $C_{0}'$. The number of SNPs reaching significance in the two-stage procedure is usually small enough that this examination is readily tractable. 

\subsubsection*{Aberrance in two or more cohorts}

If SNPs are aberrant in both $C_{1}$ and $C_{1}'$, or in both $C_{0}$ and $C_{0}'$, the effect on $R_{A}$ and $R_{B}$ is similar. If both cohorts are aberrant in the same direction, there is no way to differentiate the SNP from a genuine association on the basis of the genotype data alone. If cohorts are aberrant in different directions, then in both methods, the type-1 error rate is lower than for a null SNP with no aberration or aberration in only one cohort, as effect sizes for the discovery and replication cohorts are biased in opposite directions. The same typically holds if $C_{0}'$ and $C_{1}$, or $C_{0}$ and $C_{1}'$, are biased in the same direction. 

If $C_{0}'$ and $C_{1}'$ or $C_{0}$ and $C_{1}$ are both biased in the same direction, $R_{A}$ is generally lower than $R_{B}$, as $\zeta_{s} \neq 0$. Both $R_{A}$ and $R_{B}$ are bounded by $\frac{\alpha}{2}$ in this case. In addition, a systematic bias in both replication groups (or both discovery groups) is likely to be due to a known confounder, the effect of which can be removed by performing a stratified test (as is typically good practice when confounders are known). Aberrance in opposite directions leads to $R_{B}>R_{A}$ in the first case, and a scenario similar to aberrance in $C_{0}$ in the second case.

Aberrance in three or more cohorts corresponds to a chaotic scenario in which neither methods A,B, or C will reliably provide FPR control. Aberrance of this extent is typically detectable and removable using quality control procedures.


\clearpage

\bibliography{General}

\clearpage

\section*{Supplementary figures}
\clearpage

\begin{figure}[!htbp]
\begin{subfigure}[b]{\textwidth}
\begin{subfigure}[b]{0.4\textwidth}
\includegraphics[width=\textwidth]{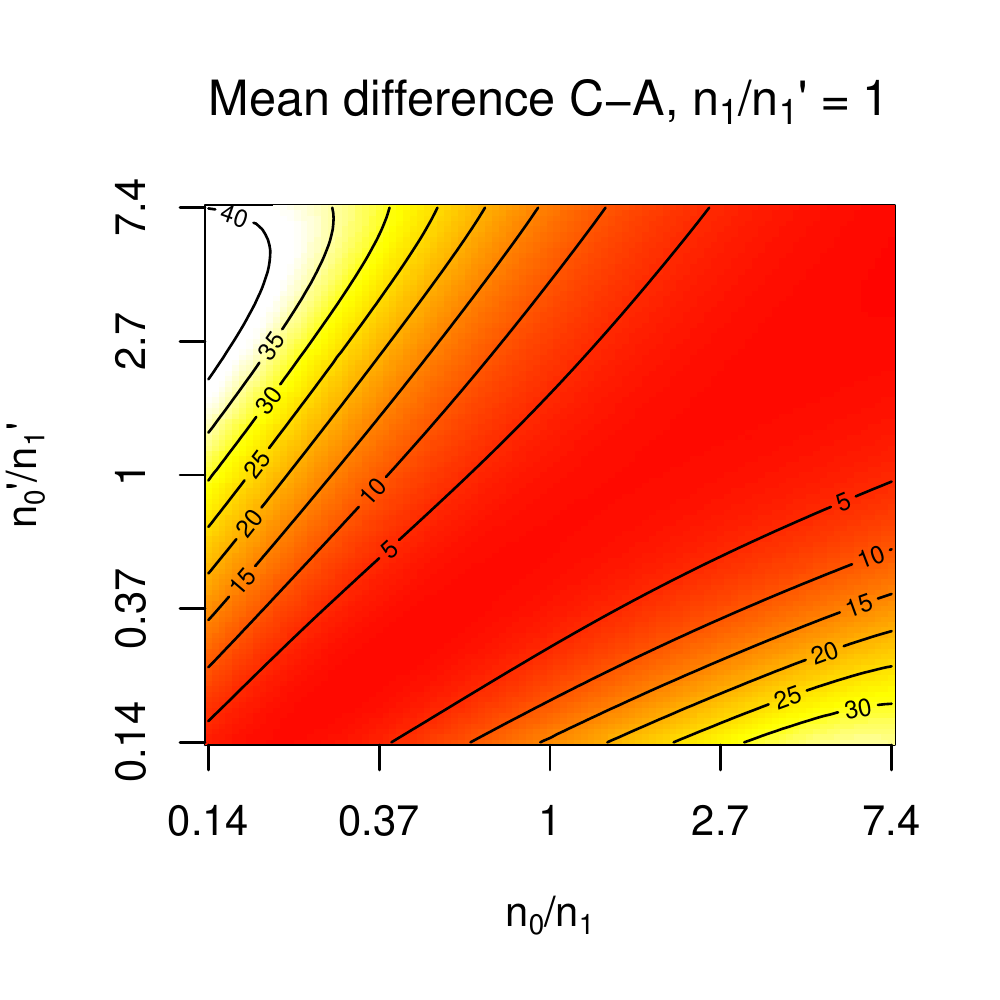}
\end{subfigure}
\begin{subfigure}[b]{0.4\textwidth}
\includegraphics[width=\textwidth]{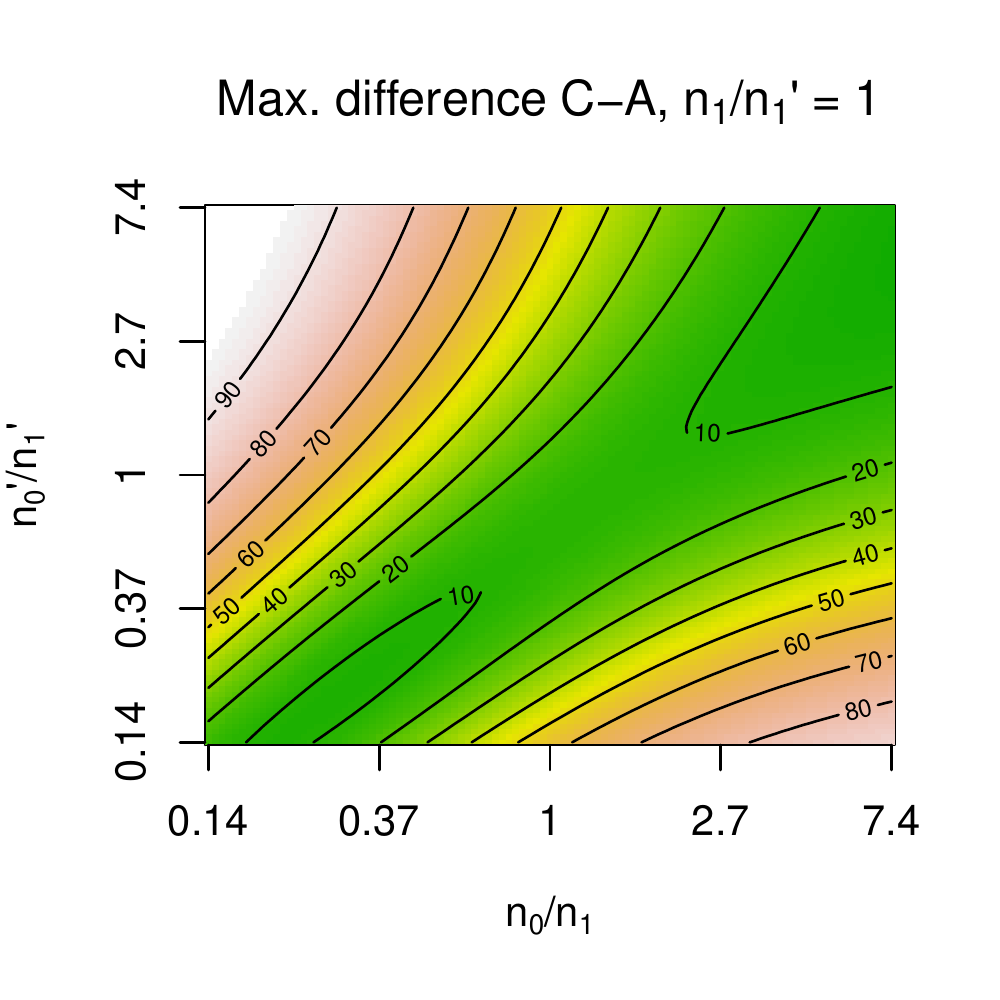}
\end{subfigure}
\begin{subfigure}[b]{0.13\textwidth}
\includegraphics[width=\textwidth]{index.pdf}
\end{subfigure}
\end{subfigure}

\begin{subfigure}[b]{\textwidth}
\begin{subfigure}[b]{0.4\textwidth}
\includegraphics[width=\textwidth]{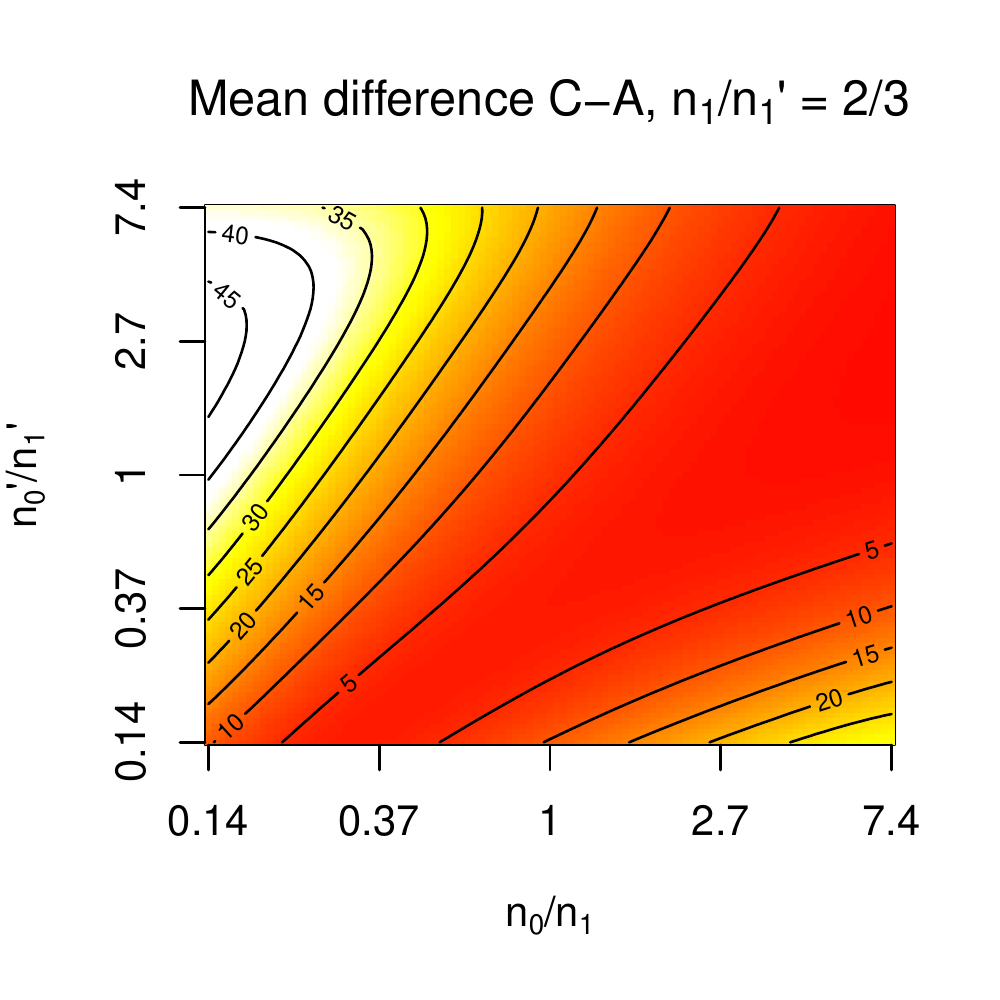}
\end{subfigure}
\begin{subfigure}[b]{0.4\textwidth}
\includegraphics[width=\textwidth]{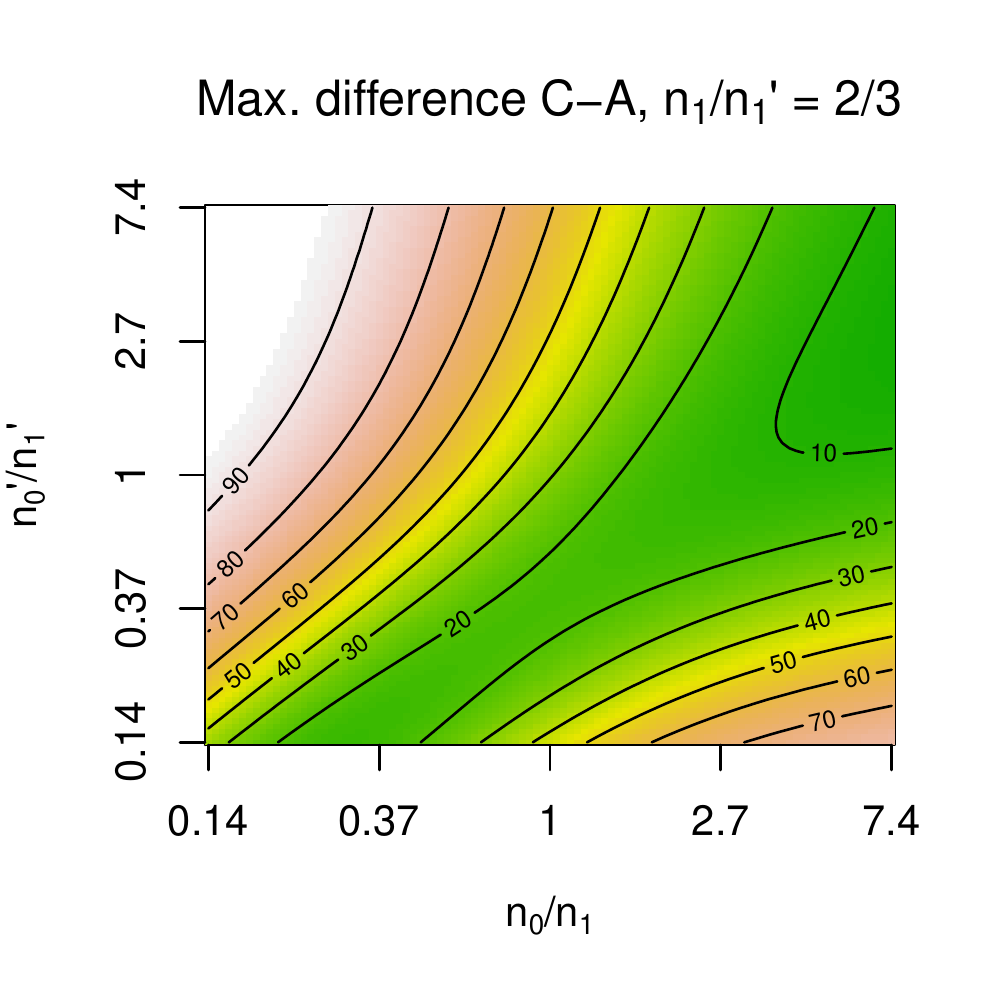}
\end{subfigure}
\end{subfigure}

\begin{subfigure}[b]{\textwidth}
\begin{subfigure}[b]{0.4\textwidth}
\includegraphics[width=\textwidth]{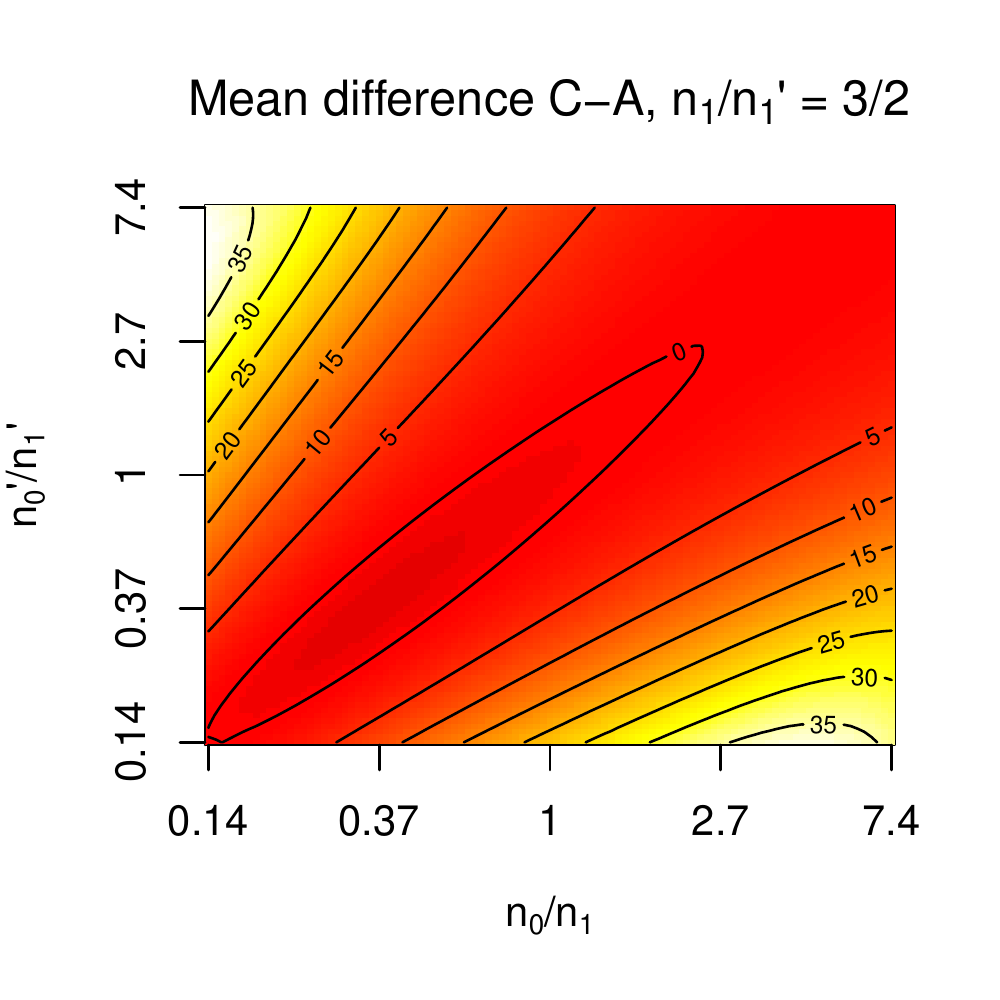}
\end{subfigure}
\begin{subfigure}[b]{0.4\textwidth}
\includegraphics[width=\textwidth]{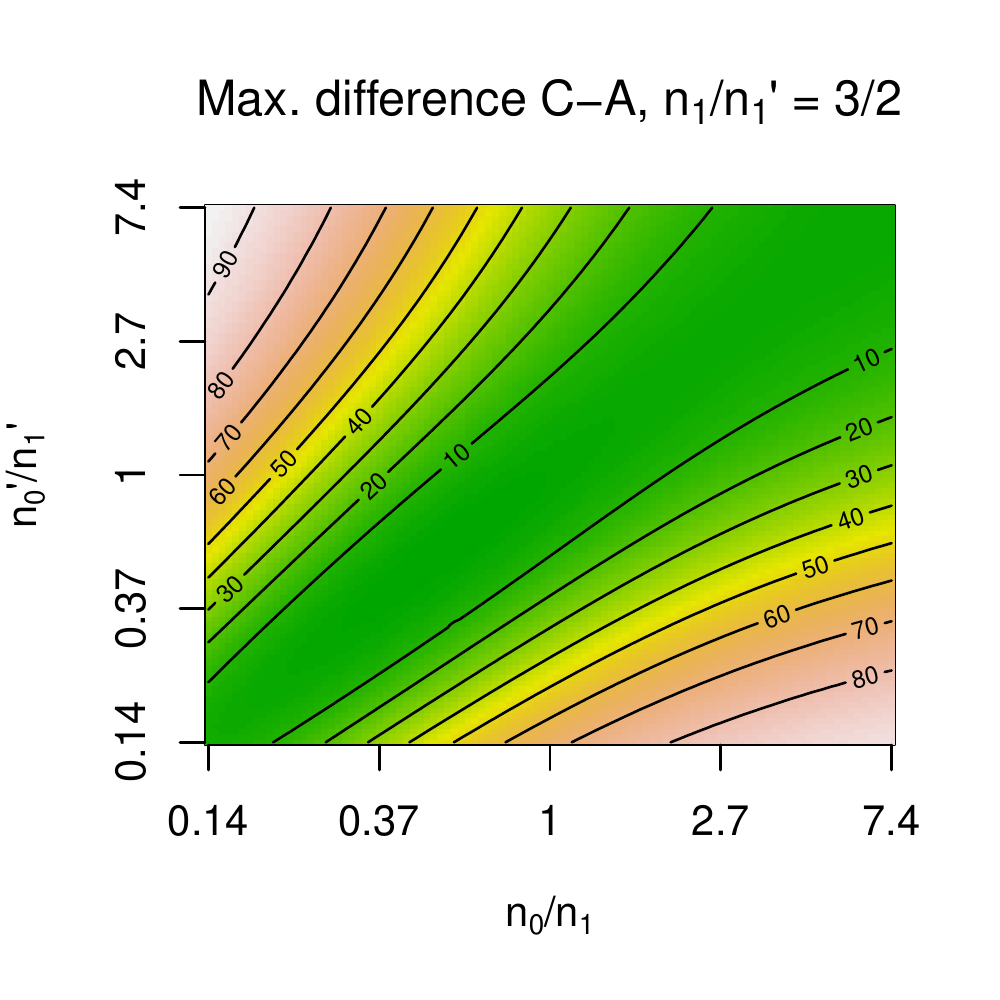}
\end{subfigure}
\end{subfigure}
\caption{Power difference (\%) between methods C and A. Mean power difference is taken as the integral of power difference between methods (see methods section) over $\mathbb{R}$ with respect to log-odds ratio. In all cases, 20 000 samples are used overall for a SNP with MAF 0.1, with cutoffs $\alpha=5 \times 10^{-6}$, $\beta=5 \times 10^{-4}$, $\gamma=5 \times 10^{-8}$. Method C is almost universally more powerful.}
\label{fig:powerAC}
\end{figure}

\begin{figure}[!htbp]
\begin{subfigure}[b]{\textwidth}
\begin{subfigure}[b]{0.4\textwidth}
\includegraphics[width=\textwidth]{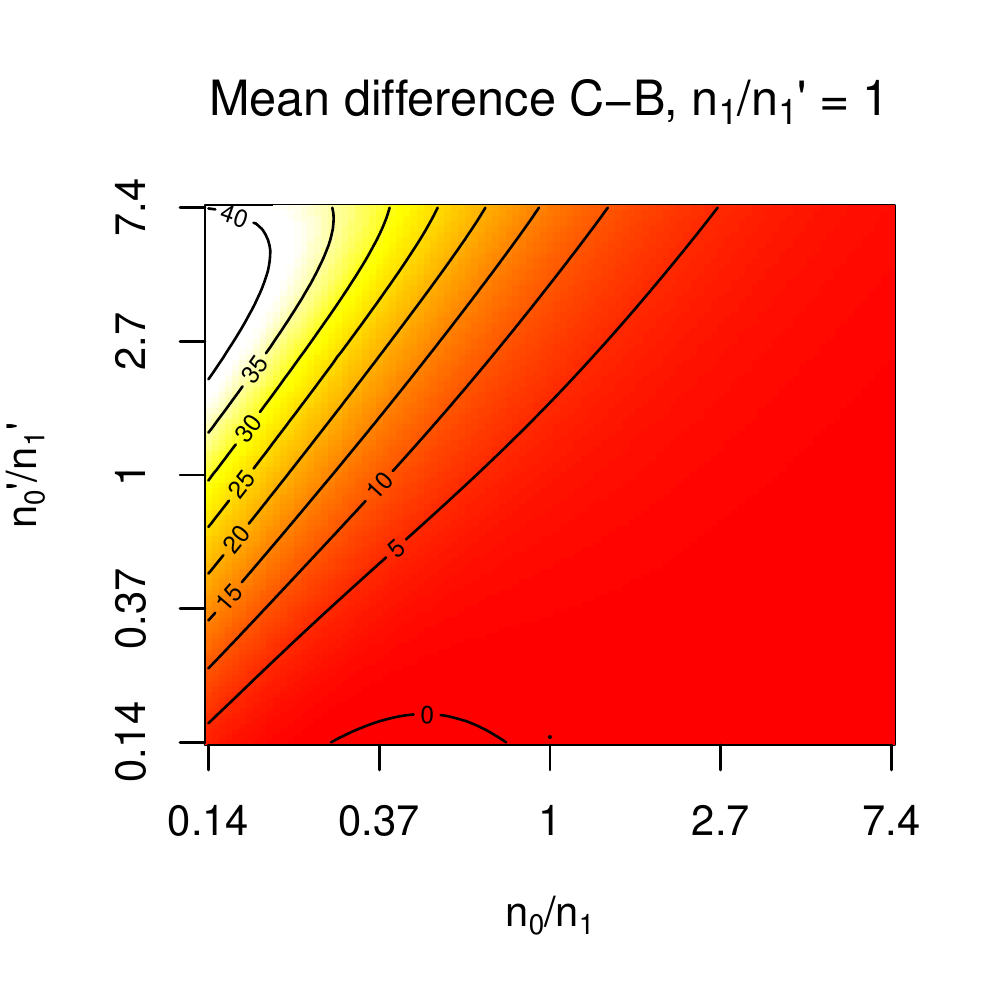}
\end{subfigure}
\begin{subfigure}[b]{0.4\textwidth}
\includegraphics[width=\textwidth]{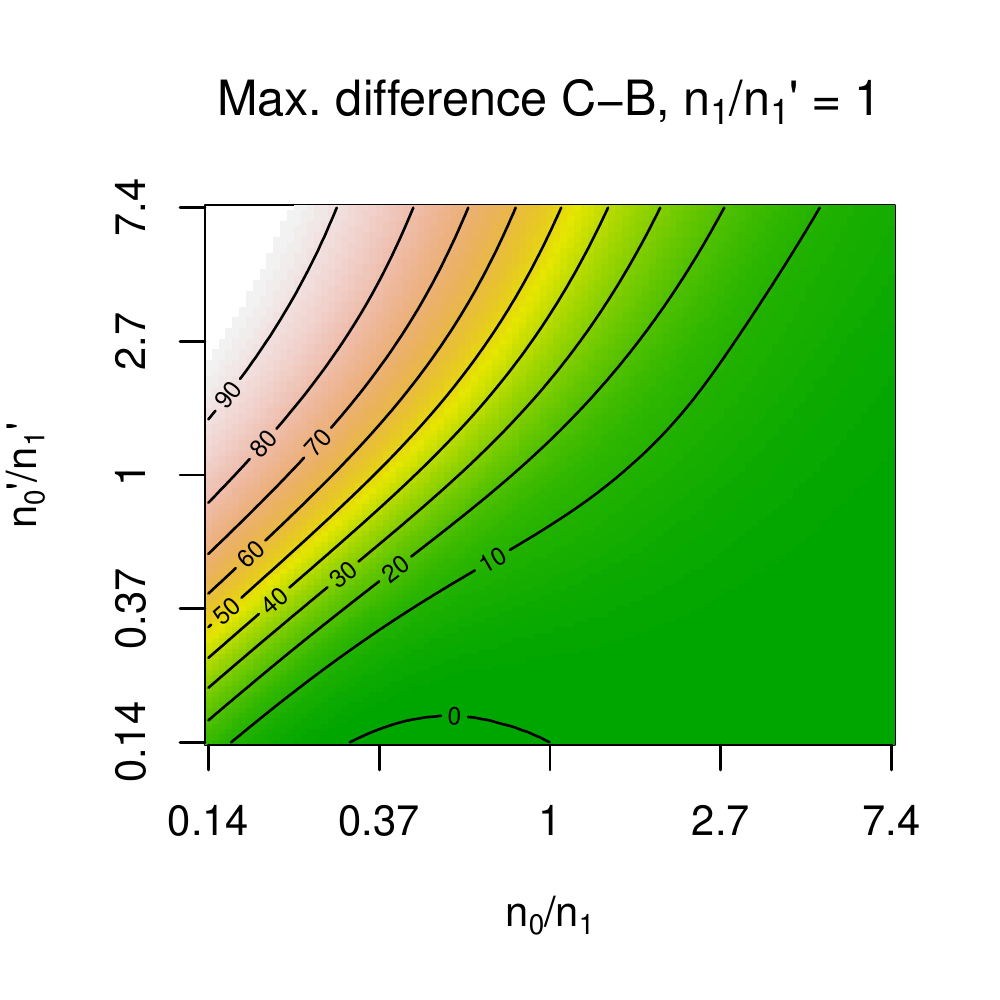}
\end{subfigure}
\begin{subfigure}[b]{0.13\textwidth}
\includegraphics[width=\textwidth]{index.pdf}
\end{subfigure}
\end{subfigure}

\begin{subfigure}[b]{\textwidth}
\begin{subfigure}[b]{0.4\textwidth}
\includegraphics[width=\textwidth]{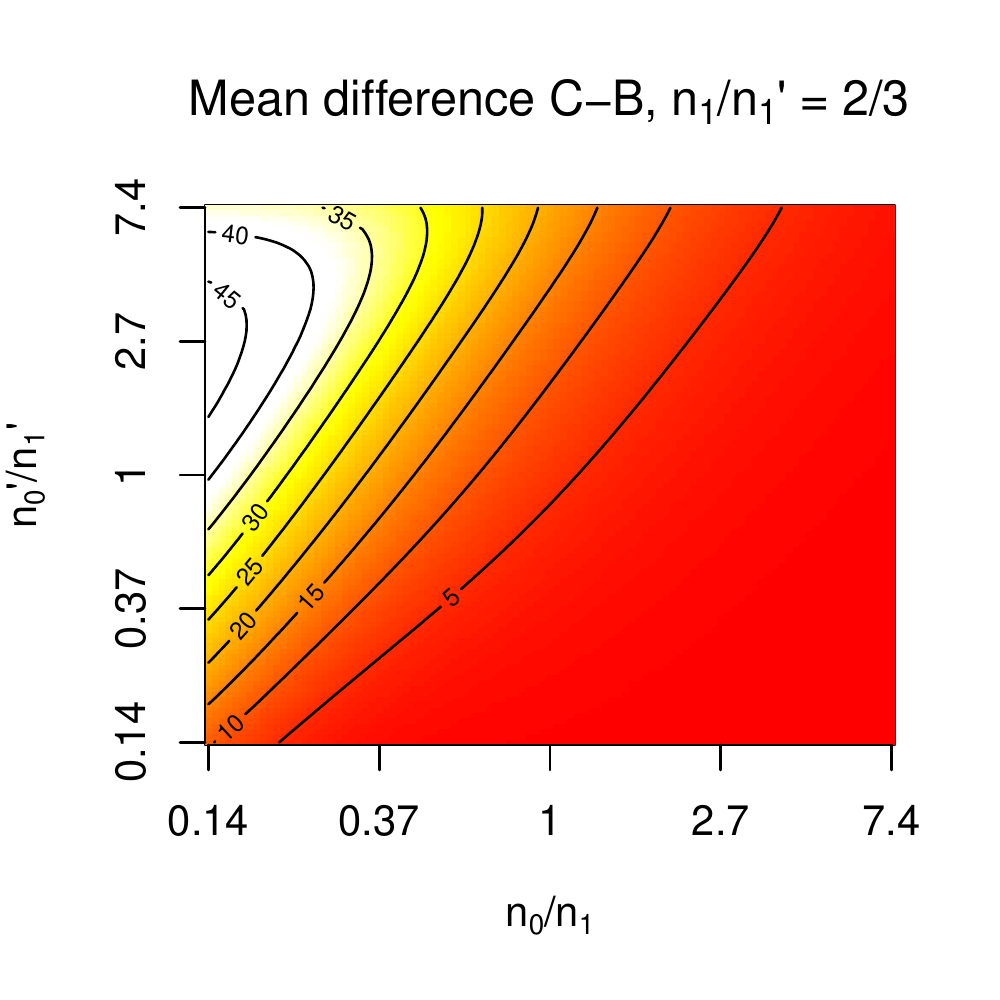}
\end{subfigure}
\begin{subfigure}[b]{0.4\textwidth}
\includegraphics[width=\textwidth]{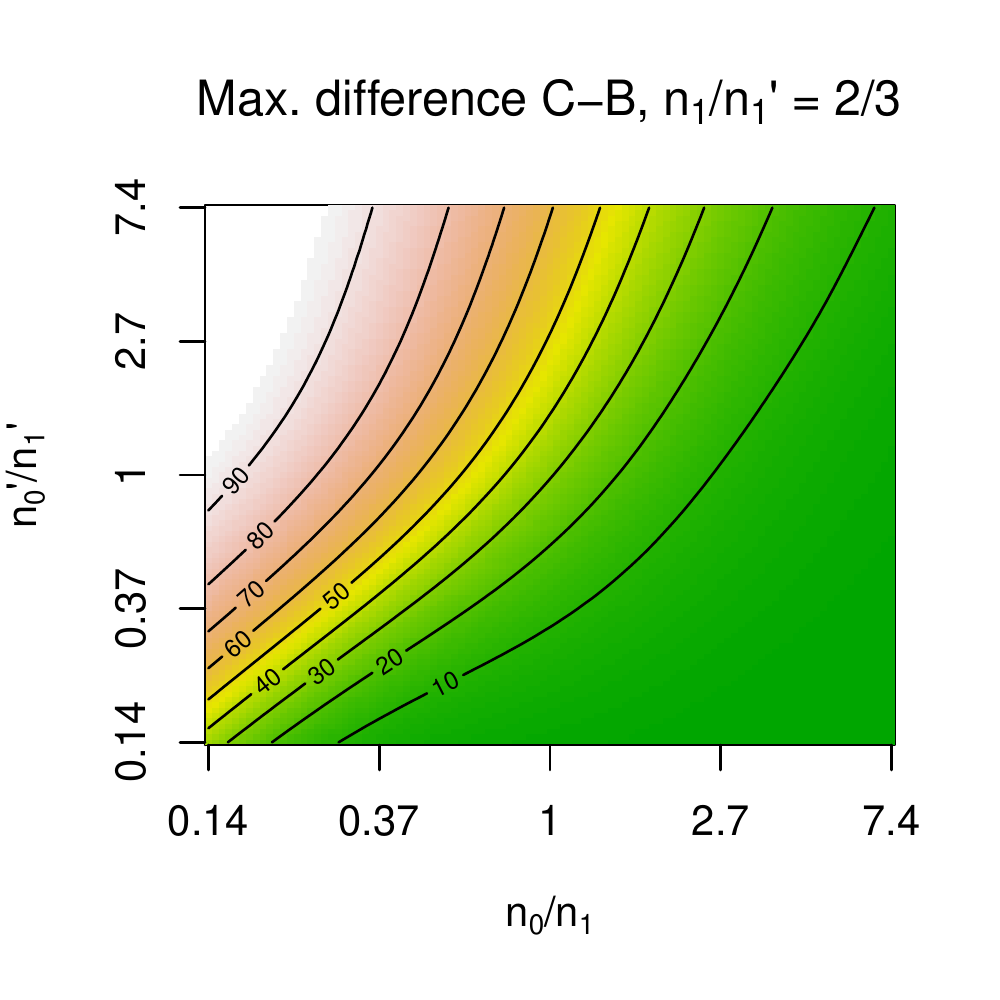}
\end{subfigure}
\end{subfigure}

\begin{subfigure}[b]{\textwidth}
\begin{subfigure}[b]{0.4\textwidth}
\includegraphics[width=\textwidth]{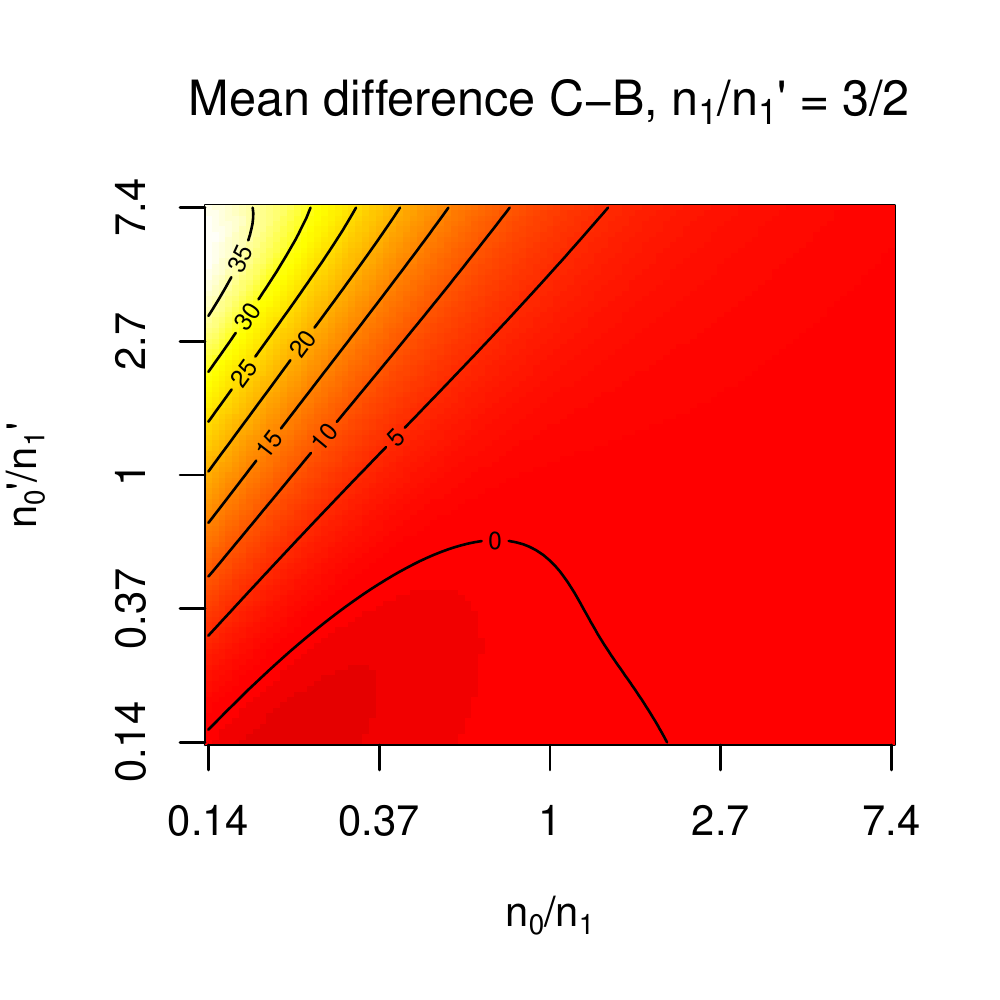}
\end{subfigure}
\begin{subfigure}[b]{0.4\textwidth}
\includegraphics[width=\textwidth]{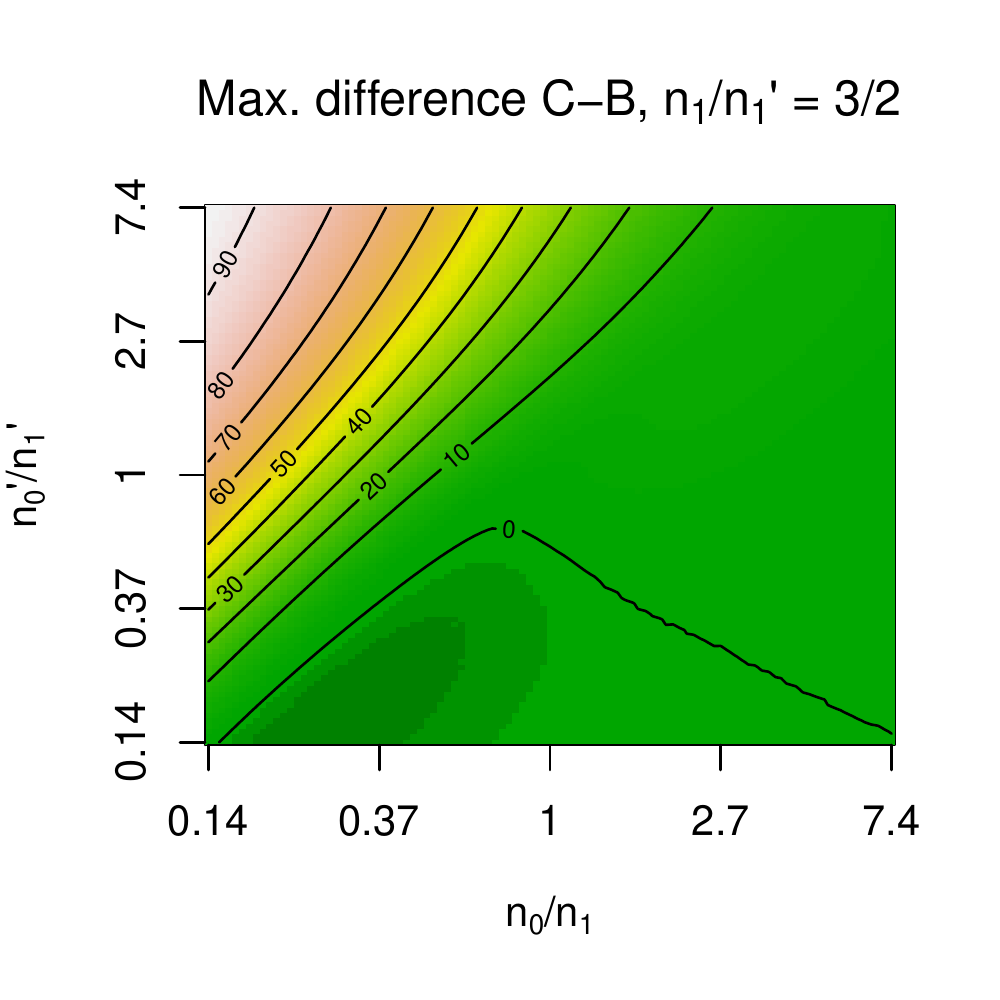}
\end{subfigure}
\end{subfigure}
\caption{Power difference (\%) between methods C and B. Mean power difference is taken as the integral of power difference between methods (see methods section) over $\mathbb{R}$ with respect to log-odds ratio. In all cases, 20 000 samples are used overall for a SNP with MAF 0.1, with cutoffs $\alpha=5 \times 10^{-6}$, $\beta=5 \times 10^{-4}$, $\gamma=5 \times 10^{-8}$.}
\label{fig:powerBC}
\end{figure}

\clearpage

\section{Appendices}
\subsection{Covariance between Z scores due to shared samples}
\label{apx:zcor}

The matching of type-1 error rates between methods relies on establishing the covariance between triples of z-scores under $H_{0}^\cap$. The covariance can be readily approximated when z-scores are assumed to be derived from tests of equality of binomial random variables $m_{i}$. Z-scores comparing proportions without using additional information (such as strata or covariates) which are monotonic to allelic difference and asymptotically have $N(0,1)$ distribution must be asymptotically equivalent to those derived from binomial comparisons, so this assumption is reasonable. 

If strata or covariates are used, either an assumption must be made that computed z-scores are well-approximated by comparisons of binomial proportions, or correlations must be approximated allowing for the covariate or strata structure. This is sometimes tractable analytically, but can also be estimated empirically either by using known non-associated variants or by simulating variants with the same covariate structure.

The presence of strata or covariates also affects the values $\zeta_i$, and if the effects of covariates are large, the approximations in equations~\ref{eq:zetadef} in the main paper may be poor. Values $\zeta_i$ can be estimated as functions of allelic differences by simulating variants with the same covariate structure.

\subsubsection{No covariates or stratification}

Assume study $i$ and $j$ have $n_{0}^{i}$, $n_{0}^{j}$ controls and $n_{1}^{i}$, $n_{1}^{j}$ cases respectively, of which $n_{0}^{ij}$ controls and $n_{1}^{ij}$ cases are shared between both studies. Let $m_{0},m_{1},m_{0}',m_{1}'$, denote the observed allele frequencies of a SNP in the respective cohort, and $\mu_{0},\mu_{1},\mu_{0}',\mu_{1}'$ the expected allele frequency. 

If no strata or covariates are used in the calculation of summary statistics, $z$ scores $z_{d}, z_{r}, z_{s}, z_{m}$ are asymptotically proportional to the allelic differences $m_{1}-m_{0}$, $m_{1}'-m_{0}'$, $m_{1}'-\frac{m_{0}'n_{0}' + m_{0}n_{0}}{n_{0}+n_{0}'}$, $\frac{m_{1}n_{1}+m_{1}'n_{1}'}{n_{1}+n_{1}'}-\frac{m_{0}'n_{0}' + m_{0}n_{0}}{n_{0}+n_{0}'}$ respectively, since $z$ scores are monotonic with allelic differences and allelic differences are asymptotically normal.
Since $m_{0},m_{1},m_{0}',m_{1}'$ are independent and asymptotically normal the multivariate random variables $(z_{d}, z_{r}, z_{m})$ and $(z_{d}, z_{s}, z_{m})$ have multivariate normal distributions. 

For studies $i$ on $n_{0i}$, $n_{1i}$ controls and cases and $j$ on $n_{0j}$, $n_{1j}$ controls and cases in which $n_{0ij}$ and $n_{1ij}$ controls and cases are shared between studies, the correlation between the observed allelic differences $m_{1i}-m_{0i}$, $m_{1j}-m_{0j}$ is given by 
\begin{equation}
cor(m_{1i}-m_{0i},m_{1j}-m_{0j})=\frac{n_{0i}n_{0j}n_{1ij} + n_{1i}n_{1j}n_{0ij}}{n_{0i}n_{0j}n_{1i}n_{1j}\sqrt{\frac{1}{n_{0i}}+\frac{1}{n_{0j}}} \sqrt{\frac{1}{n_{0j}}+\frac{1}{n_{1j}}}} \label{eq:covdef}
\end{equation}
This holds under $H_{0}^{\cap}$ and approximately holds in general.
Expressions for $\rho_{ds}$, $\rho_{dm}$, $\rho_{rm}$ and $\rho_{sm}$ may be derived in terms of $n_{0}$, $n_{1}$, $n_{0}'$, and $n_{1}'$. Specifically
\begin{align}
det(\Sigma_{A}) &= 1  - \rho_{dm}^{2} - \rho_{rm}^{2} \nonumber \\
&= \frac{(n_{0} n_{1}' - n_{0}'n_{1})^{2}}{(n_{0}+n_{0}')(n_{1}+n_{1}')(n_{0}+n_{1})(n_{0}'+n_{1}')} \label{eq:detsigmaA} \\
det(\Sigma_{B}) &= 1-\rho_{dm}^{2} - \rho_{ds}^{2} - \rho_{sm}^{2} + 2 \rho_{dm}\rho_{ds}\rho_{sm} \nonumber \\ 
&= \frac{n_{0}' n_{1}^{2}}{(n_{0}+n_{1})(n_{0}+n_{0}'+n_{1}')(n_{1}+n_{1}')}
\label{eq:detsigmaB}
\end{align}
so $\Sigma_{A}$ is singular if $\frac{n_{0}}{n_{1}}=\frac{n_{0}'}{n_{1}'}$, and $\Sigma_{B}$ if $n_{0}'n_{1}=0$.





%

\subsubsection{Z scores with stratification}

If computation of Z scores is performed with correction for strata or covariates, formula~\ref{eq:covdef} will not asymptotically hold and may be a poor approximation to the true covariance. The true covariance can be computed in some cases.

If samples are divided into strata $1,2,...s$, and $n_{pq}^{r}$, $m_{pq}^{r}$, $\mu_{pq}^{r}$ denote the number of samples and  observed and expected minor allele frequencies in cohort $p$, study $q$, stratum $r$ respectively, then the $z$ score $z_{i}$ for study $q=i$ is asymptotically given by
\begin{equation}
z_{i} = \sum_{r \in 1..s} \alpha_{ir} \left( m_{1i}^{r}-m_{0i}^{r} \right)
\end{equation}
for positive values $\alpha_{ir}$ depending on the values $n_{pi}^{r}$. If the Cochran-Mantel-Hanszel test is used, then 
\begin{equation}
\alpha_{ir} \propto \frac{n_{0i}^{r}n_{1i}^{r}}{n_{0i}^{r} + n_{1i}^{r}}
\end{equation}
Suppose that $n_{0ij}^{r}$ controls and $n_{1ij}^{r}$ cases are shared between studies $i$ and $j$ in stratum $r$. Since the values $m_{pq}^{r}$ are dependent only within the same values of $p$ and $r$, the correlation between $z_{i}$ and $z_{j}$ under the null hypothesis $\mu_{0i}^{r} \equiv \mu_{1i}^{r}, \mu_{0i}^{r} \equiv \mu_{1i}^{r}$ is given by
\begin{align}
cor(z_{i},z_{j}) &= \frac{\sum
\alpha_{ir}\alpha_{jr}cov(m_{1i}^{r}-m_{0i}^{r},m_{1j}^{r}-m_{0j}^{r})}{\sqrt{var(m_{1i}^{r}-m_{0i}^{r})var(m_{1j}^{r}-m_{0j}^{r})}} \nonumber \\
&\approx \frac{\sum_{r \in 1..s} \alpha
\left(\frac{n_{0ij}^{r}}{n_{0i}^{r}n_{0j}^{r}}+\frac{n_{1ij}^{r}}{n_{1i}^{r}n_{1j}^{r}}\right) }{\sqrt{\left(\sum\alpha_{ir}^{2}\left(\frac{1}{n_{0i}^{r}} + \frac{1}{n_{1i}^{r}}\right)\right)\left(\sum\alpha_{jr}^{2}\left(\frac{1}{n_{0j}^{r}} + \frac{1}{n_{1j}^{r}}\right)\right)}}
\end{align}
where all sums are over the values of $r \in 1..s$ for which the relevant values of $n_{pq}^{r}$ are positive.

\subsubsection{Z scores with covariates}

If $z$ scores are computed adjusting for one or more covariates, the estimation of correlation is more difficult. 

Assume that in a case population $C_1$ and a control population $C_0$ the values of some covariate(s) $x$ have different known distributions $f_1$, $f_0$, and that genotypes $g$ at some SNP of interest may vary with $x$. We will assume the populations are large and that $f_1$, $f_0$, and $E(g|x)$ are continuous functions of $x$. 



Let $g_{p}^k$ denote the genotype of individual $k$ in cohort $p$ ($p \in C_0, C_1$) and $x_p^k$ denote covariate value(s), where $g_p^k$ is an observation of a random variable $g$. 
An idealised $z$-score testing association of $g$ with case/control status should be monotonic with each $g_p^k$ and have expectation 0 if $x$ is independent of case/control status, whatever the form of the function $E(g|x)$. Because individual genotypes are assumed to be independent between individuals, cross-terms of the form $\prod_i g_i$ should carry no additional information from singleton genotypes. We thus assume that $z$ can thus be decomposed into a weighted linear sum of 
individual genotypes:
\begin{equation}
z \propto \frac{1}{|C_{1}|}\sum_{k \in C_{1}} c_{1}^{k} g_1^{k} - \frac{1}{|C_{0}|}\sum_{k \in C_{0}} c_0^k g_0^k \label{eq:genzdef}
\end{equation}
where the (positive) values $c_1^k, c_0^k$ depend only on the values $x_1^\cdot,x_0^\cdot$; that is, not on the relationship between $g$ and $x$, and the constant of proportionality depends on only on the observed allele frequency. Let function $c_{0}(x),c_{1}(x)$ denote the values of $c_{i}$ corresponding to covariate value(s) $x$ in $C_0$, $C_1$. 


For a null SNP, $E(g|x)$ is independent of case/control status, but may take any (continuous) form. We have
\begin{align}
E(z) &\propto E \left(\frac{1}{|C_{1}|}\sum_{i \in C_{1}} c_{i} g_{i} - \frac{1}{|C_{0}|}\sum_{i \in C_{0}} c_i g_i \right) \nonumber \\
\lim_{|C_{0}|,|C_{1}| \to \infty} E(z) &\propto \int c_{1}(x)f_{1}(x)E(g|x) dx - \int c_{0}(x)f_{0}(x)E(g|x) \nonumber \\
&\propto \int \left( c_{1}(x)f_{1}(x) -  c_{0}(x)f_{0}(x) \right) E(g|x) dx
\end{align}
From a standard result from the calculus of variations, this implies that
\begin{equation}
c_1 (x) f_1 (x) - c_0 (x) f_0 (x) \equiv 0
\implies c_{1}(x) \propto \frac{f(x)}{f_{1}(x)}, c_{0}(x) \propto \frac{f(x)}{f_{0}(x)} \label{eq:commondist}
\end{equation}
for some function $f$, so the values $c_1^k, c_0^k$ effectively reweight the contribution of individuals to a common density $f(x)$ across $x$. The procedure of weighting observation $k$ in this way is analogous to a limiting case of stratification, in which weights are defined by the frequency of stratum $r$ (see above). For a constant allelic difference across the range of $x$, the best common distribution to `map to' does not depend on the relationship between $g$ and $x$, and hence the best values of $c_i$ should be constant for all functions $E(g|x)$.

Let $z_{q}$ denote a z-score for study $q$; $n_{pq}$, $f_{pq}=f_{pq}(x)$ and $C_{pq}$ denote the number of samples, density function of $x$, and set of samples in cohort $p$, study $q$; $g_{pq}^{i}$ and $c_{pq}^{k}$ denote the normalised genotype of sample $k$ in cohort $p$, study $q$ and its coefficient in $z_{q}$; $n_{0s}$, $n_{1s}$, $f_{0s}$, $f_{1s}$ and $C_{0}^s$, $C_{1}^{s}$ the number of shared controls/cases between studies, the density of $x$ amongst the shared samples, and the sets of shared samples; and $f_{q}$ the common density function to which cases and controls are weighted in study $q$ (equation~\ref{eq:commondist}). Then
\begin{align}
cov(z_{i},z_{j}) 
&\approx \frac{\frac{1}{n_{0i}n_{0j}} \sum_{k \in C_{0}^{s}} c_{0i}^k c_{0j}^{k} +  \frac{1}{n_{1i}n_{1j}} \sum_{k \in C_{1}^{s}} c_{1i}^k c_{1j}^{k}}{
\sqrt{\frac{1}{n_{1i}^2}\sum_{k \in C_{1i}} ( c_{1i}^k )^2 + \frac{1}{n_{0i}^2}\sum_{k \in C_{0i}} ( c_{0i}^{k} )^2 }
\sqrt{\frac{1}{n_{1j}^2}\sum_{k \in C_{1j}} ( c_{1j}^k )^2 + \frac{1}{n_{0j}^2}\sum_{k \in C_{0j}} ( c_{0j}^{k} )^2 }
} \label{eq:scordef} \\
&\to \frac{\frac{n_{0s}}{n_{0i}n_{0j}}\int f_{0s}(x)\frac{f_{i}(x)f_{j}(x)}{f_{0i}(x)f_{0j}(x)} dx + \frac{n_{1s}}{n_{1i}n_{1j}}\int f_{1s}(x) \frac{f_{i}(x)f_{j}(x)}{f_{1i}(x)f_{1j}(x) dx}}{\sqrt{\frac{1}{n_{0i}} \int \frac{f_i (x)^2}{f_{0i}(x)} dx + \frac{1}{n_{1i}} \int \frac{f_{i}(x)^2}{f_{1i}(x)} dx}\sqrt{\frac{1}{n_{0j}} \int \frac{f_j (x)^2}{f_{0j}(x)} dx + \frac{1}{n_{1j}} \int \frac{f_{j}(x)^2}{f_{1j}(x)} dx}}
\end{align}
with integrals over the domain of $x$, and the limit as sample sizes tend to infinity while ratios between them remain bounded.

Logistic regression models with continuous covariates can only model simple (generally linear) relationships between $c_i$ and $x_i$, and property~\ref{eq:commondist} may not hold. If the values $c_{pq}^k$ are known, the correlation can be determined using equation~\ref{eq:scordef}. If not, some methods for estimating correlation are outlined below.



\subsubsection{Practical estimation of covariance}

Although the asymptotic correlation between $z$ scores may be intractable, as long as the $z$ score permits an expansion of the form~\ref{eq:genzdef}, the correlation will be nearly invariant with allele frequency and change only minimally for SNPs associated with the covariate.

In practical terms, one method to estimate the correlation between $z$ scores is to simply use the sample correlation at a set of variants presumed to be not associated with the main trait of interest. This approach may be unreliable and have limited power due to the difficulty of identifying such variants

Another option is to permute existing genotypes without permuting covariates, and compute correlation between resultant $z$ scores. This has the disadvantage that it is difficult to permute whilst maintaining potential relationships between genotypes and confounders.

Since the correlation should only depend on the sample sizes and structure of covariate distributions, a more convenient and powerful method is to simply simulate sets of genotypes unassociated with the trait, but potentially associated with covariates in a range of different ways, and compute correlation between the resultant $z$ scores. Given the shortcomings of standard methods such as logistic regression in fully accounting for covariate effects, this is an advisable procedure in any analysis including covariates.

All results in the main paper which require conditions on sample sizes are only approximate when using studies with stratification or covariates, with the approximation worsening with greater differences in covariate values between groups and lower effective sample sizes.

\subsection{Properties of $\beta^*$}

\subsubsection{Asymptotic properties of $\beta^*$}
\label{apx:infbetastar}

In this appendix, an asymptotic approximation is established for $\beta^*$
and it is shown that $\beta^{*}>\beta$ for all $n_{0}^{i}$, $n_{0}^{j}$, $n_{1}^{i}$, $n_{1}^{j}$, $z_{\alpha}$, $z_{\gamma}$. Define $\Sigma_{A}$ and $\Sigma_{B}$ as per equations \ref{eq:sigcdef} in the main paper, and note that $\Sigma_{A}$ and $\Sigma_{B}$ only differ in their middle row/column. Further define 
\begin{equation}
\Sigma_{dm}=var\left( (z_{d} \, z_{m})^t | H_{0}^{\cup}\right)=\begin{pmatrix} 1 & \rho_{dm} \\ \rho_{dm} & 1 \end{pmatrix}
\end{equation}
Let $(z_{\alpha}' \, z_{\gamma}')$ be the point in $\{ z_{d}>z_{\alpha}, z_{m}>z_{\gamma} \}$ at minimal Mahalanobis distance from the origin with respect to $\Sigma_{dm}$ (ie, minimal $(z_{d} \, z_{m})\Sigma_{dm}^{-1} (z_{d} \, z_{m})^t$). Then for $z_{\gamma}' - \rho_{dm}z_{\alpha}'$ held constant, we have
\begin{equation}
\lim_{z_{\gamma}' \to \infty / z_{\alpha}' \to \infty} \frac{\sqrt{|\Sigma_{A}|}\left((\rho_{ds}\rho_{dm}-\rho_{sm})z_{\gamma}' + (\rho_{dm}\rho_{sm} - \rho_{ds})z_{\alpha}' + |\Sigma_{dm}|z_{\beta^*}\right)}{\sqrt{|\Sigma_{B}|}\left(-\rho_{rm}z_{\gamma}' + \rho_{dm}\rho_{rm}z_{\alpha}' + |\Sigma_{dm}|z_{\beta}\right)}=1
\end{equation}
Specifically, for  $\beta^*$ defined as per equation~\ref{eq:betastardef2}, we have
\begin{equation}
\lim_{\alpha \to 0} \frac{z_{\beta^{*}}}{\sqrt{1-\rho_{ds}^{2}}z_{\beta} + \rho_{ds} z_{\alpha}}=1 \label{eq:infbetastar2}
\end{equation}
and $z_{\beta^{*}} > max(\beta,\sqrt{1-\rho_{ds}^{2}}z_{\beta} + \rho z_{\alpha})$ for all $z_{\alpha}$. Firstly the following lemma and corollary are established:

\begin{lemma}
\label{lemma:genlimit}
Let $\Sigma$ be a positive definite matrix of dimension $N$, $\mathbf{x}$ be the vector $(x_{1} \, x_{2} ... x_{n})^{t}$, $\mathbf{A_{1}}$, $\mathbf{A_{0}}$, and $\mathbf{Z}=(z_{1} \, z_{2} ... z_{n})^{t}$ constant vectors of dimension $N$ with $\mathbf{A_{1}} \ne \mathbf{A_{0}} \neq 0$, $C_{0}$ a constant, and $R$ the (closed) region $x_{1}\geq z_{1}, x_{2} \geq z_{2},,,x_{N} \geq
z_{N}$.

Define $C$ as the (unique) value satisfying
\begin{equation}
\int_{R} 
e^{-\frac{1}{2} \mathbf{x^{t}} \Sigma^{-1} \mathbf{x}}  \left(\Phi(\mathbf{A_{1}^{t} x}+C) - \Phi(\mathbf{A_{0}^{t}x} + C_{0})  \right) dx_{1} dx_{2}...dx_{N} = 0 \label{eq:lemma2}
\end{equation}
Denote $\mathbf{y}=(y_{1} \, y_{2} ... y_{N})$ as the point in R at minimal Mahalanobis distance $M(\mathbf{y})$ from the origin with respect to $\Sigma$ (usually, $\mathbf{y}=\mathbf{Z}$). Consider all regions $R$ for which the corresponding value of $\mathbf{y}$ lies on the hyperplane $\mathbf{A_{0}^{t}y} + C_{0}'=0$, $C_{0}' \ne C_{0}$. Then
\begin{equation}
\lim_{M(y) \to \infty | \mathbf{A_{0}^{t}y} + C_{0}'=0} \frac{\mathbf{A_{1}^{t} y}+C}{\mathbf{A_{0}^{t}y}+C_{0}} = \lim_{M(y) \to \infty | \mathbf{A_{0}^{t}y} + C_{0}'=0} \frac{\mathbf{A_{1}^{t} y}+C}{C_{0}- C_{0}'} = 1 
\end{equation}
\end{lemma}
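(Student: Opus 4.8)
The plan is to read the defining equation~\eqref{eq:lemma2} as a Laplace-type asymptotic balance and to extract from it the limiting relation between the two linear forms $u(\mathbf{x})=\mathbf{A_{1}}^{t}\mathbf{x}+C$ and $v(\mathbf{x})=\mathbf{A_{0}}^{t}\mathbf{x}+C_{0}$. First I would dispose of existence and uniqueness of $C$: the integral on the left of~\eqref{eq:lemma2} is continuous and strictly increasing in $C$, since its $C$-derivative is $\int_{R} e^{-\frac{1}{2}\mathbf{x}^{t}\Sigma^{-1}\mathbf{x}}\phi(\mathbf{A_{1}}^{t}\mathbf{x}+C)\,d\mathbf{x}>0$, and it ranges over an interval straddling $0$ as $C$ runs over $\mathbb{R}$. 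Hence the root is unique and depends continuously on $\mathbf{Z}$, equivalently on $\mathbf{y}$.

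The key device is to write each normal CDF as a Gaussian tail probability in one auxiliary dimension, $\Phi(\mathbf{A_{i}}^{t}\mathbf{x}+C_{i})=\Pr(W_{i}\le \mathbf{A_{i}}^{t}\mathbf{x}+C_{i})$ with $W_{i}\sim N(0,1)$ independent of $\mathbf{X}\sim N(0,\Sigma)$. Equation~\eqref{eq:lemma2} then becomes the equality of two $(N+1)$-dimensional Gaussian probabilities, $\Pr(\mathbf{X}\in R,\ U_{1}\ge -C)=\Pr(\mathbf{X}\in R,\ U_{0}\ge -C_{0})$, where $U_{i}=\mathbf{A_{i}}^{t}\mathbf{X}-W_{i}$ and the orthant $R=\{x_{j}\ge z_{j}\}$ is common to both sides. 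Dividing through by $\Pr(\mathbf{X}\in R)$ reduces the whole problem to comparing two conditional tail probabilities, $\Pr(U_{1}\ge -C\mid \mathbf{X}\in R)=\Pr(U_{0}\ge -C_{0}\mid \mathbf{X}\in R)$, under the receding-orthant conditioning.

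I would then analyse this conditioning as $M(\mathbf{y})\to\infty$. By the Karush--Kuhn--Tucker conditions for minimising $\mathbf{x}^{t}\Sigma^{-1}\mathbf{x}$ over $R$, the vector $\Sigma^{-1}\mathbf{y}$ has nonnegative entries supported on the active constraints, and the restricted Gaussian concentrates at $\mathbf{y}$: in strongly-active directions the fluctuation scale is $O(1/M(\mathbf{y}))$ and collapses, while in weakly-active or free directions an $O(1)$ fluctuation survives. Writing $\mathbf{X}=\mathbf{y}+\mathbf{T}$, each $U_{i}$ is asymptotically Gaussian with mean $\mathbf{A_{i}}^{t}\mathbf{y}$ and a conditional standard deviation $s_{i}$ fixed by the surviving fluctuations and $W_{i}$, so the two conditional tails converge to $\Phi(u(\mathbf{y})/s_{1})$ and $\Phi(v(\mathbf{y})/s_{0})$. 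The hyperplane hypothesis $\mathbf{A_{0}}^{t}\mathbf{y}+C_{0}'=0$ forces $v(\mathbf{y})\equiv C_{0}-C_{0}'$ to be constant along the limit, anchoring the balance; matching the two sides then pins $u(\mathbf{y})$ to the stated limit and gives $(\mathbf{A_{1}}^{t}\mathbf{y}+C)/(\mathbf{A_{0}}^{t}\mathbf{y}+C_{0})\to 1$. Re-expressing $s_{1},s_{0}$ through the conditional means and variances of the reduced bivariate problem (via the relations $\sigma_{r\mid dm}^{2}=|\Sigma_{A}|/|\Sigma_{dm}|$ and $\sigma_{s\mid dm}^{2}=|\Sigma_{B}|/|\Sigma_{dm}|$) recovers the explicit ratio of linear forms.

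The main obstacle is exactly the anisotropic concentration in the previous step: since $\mathbf{A_{1}}-\mathbf{A_{0}}$ generally has a nonzero component along the surviving-fluctuation directions, one cannot simply replace $\mathbf{A_{i}}^{t}\mathbf{X}$ by $\mathbf{A_{i}}^{t}\mathbf{y}$, and the conditional standard deviations $s_{1},s_{0}$ enter nontrivially rather than cancelling by inspection. The delicate work is to establish the Laplace expansion uniformly as $\mathbf{y}$ recedes along the hyperplane---controlling the $O(1)$ residual fluctuations and verifying that their contributions are consistent with the ratio-one limit---and to handle the transition between the corner geometry (all constraints active) and the edge geometry (one constraint inactive). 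I would treat these as parallel cases indexed by the sign of the weakly-active multiplier, rescaling the strongly-active coordinates by $M(\mathbf{y})$, passing to the limiting fluctuation law by dominated convergence, and checking the balance condition in that limit.
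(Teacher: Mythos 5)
Your setup is sound and is a genuinely different mechanism from the paper's. The uniqueness argument via monotonicity in $C$ is the same, but where you recast equation~\ref{eq:lemma2} as equality of two conditional tail probabilities $\Pr(U_{1}\ge -C\mid \mathbf{X}\in R)=\Pr(U_{0}\ge -C_{0}\mid \mathbf{X}\in R)$ and pass to the limiting conditional law of $\mathbf{X}$ given $\mathbf{X}\in R$, the paper squeezes $C$ directly: it shows that when $C$ sits at either endpoint $(1\pm\epsilon)(C_{0}-C_{0}')-\mathbf{A_{1}^{t}y}$ the defining integral takes opposite signs, by bounding the integrand away from zero on a ball of fixed radius about $\mathbf{y}$ (fixed because the distance from $\mathbf{y}$ to the zero set of the integrand is independent of $\mathbf{y}$) and asserting that the Gaussian mass of $R$ concentrates in that ball.

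The gap is in your final matching step, which you name as ``the main obstacle'' but do not resolve --- and it is not merely delicate, it is where the argument can fail. If an $O(1)$ fluctuation $\mathbf{T}$ survives in some direction (an inactive or weakly active constraint, i.e.\ a vanishing component of $\Sigma^{-1}\mathbf{y}$), the limiting tails are $E\bigl[\Phi(\mathbf{A_{i}^{t}y}+\mathbf{A_{i}^{t}}\mathbf{T}+C_{i})\bigr]$; note $U_{i}$ is then not asymptotically Gaussian, since $\mathbf{A_{i}^{t}}\mathbf{T}$ is a one-sided truncated variable, so the form $\Phi(u(\mathbf{y})/s_{i})$ is already not correct, and in the Gaussian caricature your matching yields $u(\mathbf{y})\to (s_{1}/s_{0})(C_{0}-C_{0}')$, which is the claimed limit only if $s_{1}=s_{0}$. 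Equating the true limiting tails forces $\mathbf{A_{1}^{t}y}+C$ to $C_{0}-C_{0}'$ only when $\mathbf{A_{1}^{t}}\mathbf{T}$ and $\mathbf{A_{0}^{t}}\mathbf{T}$ have the same law, e.g.\ when $\mathbf{T}$ degenerates to $0$. This is not automatic: take $N=2$, $\Sigma=I$, $R=[z,\infty)\times[0,\infty)$ with $\mathbf{y}=(z,0)$, $\mathbf{A_{0}}=(0,1)^{t}$, $C_{0}'=0$, $C_{0}=1$, and $\mathbf{A_{1}}=(1,1)^{t}/\sqrt{2}$; the $x_{2}$-fluctuation is half-normal in the limit, the two sides see it with different coefficients ($1$ versus $1/\sqrt{2}$), and the balance forces $\mathbf{A_{1}^{t}y}+C$ to a limit strictly greater than $C_{0}-C_{0}'=1$. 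So your route closes only in the regime where every coordinate of $\Sigma^{-1}\mathbf{y}$ grows without bound and the conditional law collapses to $\delta_{\mathbf{y}}$; you must either restrict to that regime or show that $\mathbf{A_{1}}-\mathbf{A_{0}}$ annihilates the surviving fluctuation. (The paper's own concentration claim --- that the mass of $R$ outside a fixed ball about $\mathbf{y}$ becomes negligible relative to the mass inside --- quietly requires the same restriction, so the two proofs share this limitation; but the paper does supply the sign-comparison machinery that converts concentration into the conclusion, which your sketch still owes.)
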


\begin{proof}
The value $C$ is unique since the function $\Phi(\mathbf{A_{1}^{t} x}+C)$ is continuous and monotonically increasing in $C$ for all $\mathbf{x}$, and hence so is the integrand (and integral).

We proceed from the formal definition of a limit
\begin{equation}
\forall \, \epsilon>0  \, \, \exists Y \, | \, \left( M(y) > Y \implies \left| \frac{\mathbf{A_{1}^{t} y}+C}{\mathbf{A_{0}^{t} y} + C_{0}} - 1 \right| < \epsilon \right)
\end{equation}
Because $\mathbf{A_{0}^{t}y}+C_{0}'=0$, the right-hand side is equivalent to
\begin{equation}
(1-\epsilon)(C_{0}-C_{0}') - \mathbf{A_{1}^{t}y} \leq C \leq (1+\epsilon) (C_{0} -C_{0}') - \mathbf{A_{1}^{t}y} \label{eq:cinterval}
\end{equation}
We will show that there exists $Y$ such that $M(y) > Y$  implies that when $C$ takes values at the endpoints of the interval in the integral \ref{eq:cinterval}, the integral \ref{eq:lemma2} takes different signs. Since the integral is increasing in $C$ and must be 0, $C$ must lie in the interval in \ref{eq:cinterval} for $M(y)>Y$.

If $C$ takes the upper value, then at $\mathbf{x}=\mathbf{y}$, the value of the integrand is
\begin{equation}
e^{-\frac{1}{2} M(\mathbf{y})}  \left(\Phi((1+\epsilon) (C_{0}-C_{0}')) - \Phi(C_{0}-C_{0}')  \right) 
 \end{equation}
%
%
%
the sign of which depends on the sign of $C_{0}-C_{0}'$. We shall assume it is positive (with analogous arguments if it is negative). Because $\epsilon>0$, point $\mathbf{y}$ does not lie on the hyperplane $(\mathbf{A_{1}^{t}}-\mathbf{A_{0}^{t}})\mathbf{x}+(1+\epsilon)(C_{0}-C_{0}') - \mathbf{A_{1}^{t}y} - C_{0}=0$ (on which the integrand of~\ref{eq:lemma2} is 0). The distance from $\mathbf{y}$ to the hyperplane is given by
\begin{align}
D &= \frac{|(\mathbf{A_{1}^{t}}-\mathbf{A_{0}^{t}})\mathbf{y}+(1+\epsilon)(C_{0}-C_{0}') - \mathbf{A_{1}^{t}y} - C_{0}|}{||\mathbf{A_{1}^{t}}-\mathbf{A_{0}^{t}}||} \nonumber \\
&=\frac{|(1+\epsilon)(C_{0}-C_{0}') - C_{0}-C_{0}'|}{||\mathbf{A_{1}^{t}}-\mathbf{A_{0}^{t}}||}
\end{align}
which is independent of $\mathbf{y}$. Consider a hypersphere centred at $\mathbf{y}$ of radius $d<D$. Each point in the hypersphere can be expressed as $\mathbf{y} + \mathbf{\kappa}$ with $|\mathbf{\kappa}| \leq d$, so within the hypersphere we have
\begin{align}
\Phi(\mathbf{A_{1}^{t} x}+C) - \Phi(\mathbf{A_{0}^{t}x} + C_{0}) &= \Phi(\mathbf{A_{1}^{t}}(\mathbf{y}+\mathbf{\kappa})+(1+\epsilon) (C_{0} -C_{0}') - \mathbf{A_{1}^{t}y}) \nonumber \\
&\phantom{=} - \Phi(\mathbf{A_{0}^{t}}(\mathbf{y}+\mathbf{\kappa}) + C_{0}) \nonumber \\
 &= \Phi \left( (1+\epsilon)(C_{0}-C_{0}') + \mathbf{A_{1}^{t} \kappa} \right) \nonumber \\
&\phantom{=} + \Phi \left((C_{0}-C_{0}') + \mathbf{A_{0}^{t} \kappa} \right) \nonumber \\
&\geq  \Phi \left( (1+\epsilon)(C_{0}-C_{0}') + |\mathbf{A_{1}^{t}}| d \right) \nonumber \\
&\phantom{=} + \Phi \left((C_{0}-C_{0}') - |\mathbf{A_{0}^{t}}| d \right)
\end{align}
Thus $d$ can be chosen independently of $\mathbf{y}$ such that $\Phi(\mathbf{A_{1}^{t} x}+C) - \Phi(\mathbf{A_{0}^{t}x} + C_{0})$ is bounded below in the hypersphere by a constant $X$ also independent of $\mathbf{y}$. The function $\Phi(\mathbf{A_{1}^{t} x}+C) - \Phi(\mathbf{A_{0}^{t}x} + C_{0})$ is obviously bounded by $\pm 2$. Let $R'$ be the intersection of $R$ and the hypersphere. The integral~\ref{eq:lemma2} now satisfies
\begin{align}
&\phantom{=} \int_{R} 
e^{-\frac{1}{2} \mathbf{x^{t}} \Sigma^{-1} \mathbf{x}}  \left(\Phi(\mathbf{A_{1}^{t} x}+C) - \Phi(\mathbf{A_{0}^{t}x} + C_{0})  \right) dx_{1} dx_{2}...dx_{N} \nonumber \\
&= \int_{R'} 
e^{-\frac{1}{2} \mathbf{x^{t}} \Sigma^{-1} \mathbf{x}}  \left(\Phi(\mathbf{A_{1}^{t} x}+C) - \Phi(\mathbf{A_{0}^{t}x} + C_{0})  \right) dx_{1} dx_{2}...dx_{N} \nonumber \\
&\phantom{=} + \int_{R \setminus R'} 
e^{-\frac{1}{2} \mathbf{x^{t}} \Sigma^{-1} \mathbf{x}}  \left(\Phi(\mathbf{A_{1}^{t} x}+C) - \Phi(\mathbf{A_{0}^{t}x} + C_{0})  \right) dx_{1} dx_{2}...dx_{N} \nonumber \\
&> X  \int_{R'} 
e^{-\frac{1}{2} \mathbf{x^{t}} \Sigma^{-1} \mathbf{x}} dx_{1} dx_{2}...dx_{N} \nonumber \\
&\phantom{=} - 2  \int_{R \setminus R'} 
e^{-\frac{1}{2} \mathbf{x^{t}} \Sigma^{-1} \mathbf{x}} dx_{1} dx_{2}...dx_{N} \label{eq:intbound}
\end{align}
Because $d$ (the radius of the hypersphere) does not depend on $\mathbf{y}$, by the properties of the Gaussian integral a value $M_{+}$ can be chosen such that $M(y)>M_{+}$ implies that the ratio
\begin{equation}
\frac{ \int_{R'} 
e^{-\frac{1}{2} \mathbf{x^{t}} \Sigma^{-1} \mathbf{x}} dx_{1} dx_{2}...dx_{N}}{\int_{R \setminus R'} 
e^{-\frac{1}{2} \mathbf{x^{t}} \Sigma^{-1} \mathbf{x}} dx_{1} dx_{2}...dx_{N} }
\end{equation}
is arbitrarily large (namely, $>2/X$), and hence integral~\ref{eq:intbound} is positive. In a similar way, a value $M_{-}$ can be chosen such that if $C$ takes the lower value of interval \ref{eq:cinterval}, the integral is negative for $M(y)>M_{-}$. For $M(y)>max(M_{+},M_{-})$, the value of $C$ satisfying equation \ref{eq:lemma2} lies within the interval ~\ref{eq:cinterval}, and the limit is established.

\end{proof}

\begin{corollary}
\label{corollary:ylimit}
Given $b,c,y \in \mathbb{R}^{+}$, define $a$ such that
\begin{equation}
\int_{y}^{\infty} e^{-\frac{x^2}{2}} \left(\Phi(c) - \Phi(a-bx)  \right) dx = 0 \label{eq:lemma1}
\end{equation}
then
\begin{equation}
\lim_{y \to \infty} \frac{a}{b y + c} = 1 
\end{equation}
and $a> by + c \, \, \forall \, y$ 
\end{corollary}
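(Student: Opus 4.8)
The plan is to reduce the defining equation to a statement about the single quantity $t := a - by$, which the conclusion asserts tends to $c$. First I would substitute $u = x - y$ in the defining integral and factor the Gaussian as $e^{-x^2/2} = e^{-y^2/2}\,e^{-yu - u^2/2}$; the constant $e^{-y^2/2}$ divides out, leaving
\begin{equation}
\int_{0}^{\infty} e^{-yu - u^2/2}\left(\Phi(c) - \Phi(t - bu)\right)\,du = 0.
\end{equation}
Since the integrand is strictly decreasing in $t$ for every $u$, with the integral taking limits of opposite sign as $t \to \mp\infty$, this equation has a unique root $t = t(y)$, so $a = by + t(y)$ is well defined. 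It then suffices to show $t(y) > c$ for all $y$ and $t(y) \to c$ as $y \to \infty$.

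For both claims I would use the splitting $\Phi(c) - \Phi(t - bu) = (\Phi(c) - \Phi(t)) + (\Phi(t) - \Phi(t - bu))$. Substituting into the zero condition gives
\begin{equation}
\left(\Phi(t) - \Phi(c)\right)\int_{0}^{\infty} e^{-yu - u^2/2}\,du = \int_{0}^{\infty} e^{-yu - u^2/2}\left(\Phi(t) - \Phi(t - bu)\right)\,du.
\end{equation}
Because $b > 0$, the factor $\Phi(t) - \Phi(t - bu)$ is strictly positive for $u > 0$, so the right-hand side is strictly positive; as the integral on the left is positive, this forces $\Phi(t) > \Phi(c)$, hence $t(y) > c$ and $a > by + c$ for every $y$. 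This establishes the second assertion, and shows the limit is approached from above.

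For the limit I would read the last display as expressing $\Phi(t) - \Phi(c)$ as the weighted average of $\Phi(t) - \Phi(t - bu) \geq 0$ against the probability measure proportional to $e^{-yu - u^2/2}\,du$ on $[0,\infty)$. Using that $\Phi$ is Lipschitz with constant $1/\sqrt{2\pi}$ gives $\Phi(t) - \Phi(t - bu) \leq bu/\sqrt{2\pi}$, so
\begin{equation}
0 < \Phi(t) - \Phi(c) \leq \frac{b}{\sqrt{2\pi}}\cdot\frac{\int_{0}^{\infty} u\, e^{-yu - u^2/2}\,du}{\int_{0}^{\infty} e^{-yu - u^2/2}\,du}.
\end{equation}
A short estimate (numerator at most $\int_{0}^{\infty} u\, e^{-yu}\,du = 1/y^2$, denominator at least a constant multiple of $1/y$) shows this ratio is $O(1/y)$ and tends to $0$. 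Hence $\Phi(t(y)) \to \Phi(c)$, and by continuity and strict monotonicity of $\Phi$ we get $t(y) \to c$; finally $a/(by + c) = 1 + (t(y) - c)/(by + c) \to 1$, since the numerator tends to $0$ and the denominator to $\infty$.

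The only delicate point is quantifying, in the last step, the competition between the concentration of the weight $e^{-yu}$ near $u = 0$ (scale $1/y$) and the spread of $\Phi(t - bu)$ away from $\Phi(t)$; the Lipschitz bound makes this elementary. Conceptually this is the one-dimensional, degenerate instance of Lemma~\ref{lemma:genlimit} — the mass concentrates at the minimal-Mahalanobis point $x = y$, and the root is pinned by the local value of the integrand there — but because the first $\Phi$-term here is constant, corresponding to $\mathbf{A_1} = 0$ in the lemma's notation, the lemma's hyperplane construction degenerates and this direct argument is cleaner.
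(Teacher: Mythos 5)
Your proof is correct, and it diverges from the paper's in a useful way. For the inequality $a>by+c$ the two arguments are essentially the same monotonicity observation: the paper notes that $\Phi(c)-\Phi(a-bx)$ is increasing in $x$, so a zero integral forces the integrand to be negative at the left endpoint $x=y$, giving $\Phi(a-by)>\Phi(c)$; your integral identity $\left(\Phi(t)-\Phi(c)\right)\int_{0}^{\infty}e^{-yu-u^{2}/2}\,du=\int_{0}^{\infty}e^{-yu-u^{2}/2}\left(\Phi(t)-\Phi(t-bu)\right)du$ with $t=a-by$ packages the same fact, with the added benefit of establishing uniqueness of $a$ explicitly. Where you genuinely depart from the paper is the limit: the paper disposes of it with ``proceeds in a similar way to the proof of the lemma above,'' i.e.\ it re-runs the $\epsilon$--$Y$ concentration argument of Lemma~\ref{lemma:genlimit} (bounding the integrand on a ball around the minimal-Mahalanobis point and letting the Gaussian mass concentrate there). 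You instead give a direct, self-contained Laplace-type estimate: the Lipschitz bound $\Phi(t)-\Phi(t-bu)\le bu/\sqrt{2\pi}$ together with $\int_{0}^{\infty}u\,e^{-yu-u^{2}/2}\,du\le 1/y^{2}$ and $\int_{0}^{\infty}e^{-yu-u^{2}/2}\,du\gtrsim 1/y$ gives $0<\Phi(t)-\Phi(c)=O(1/y)$, hence $t\to c$ by continuity of $\Phi^{-1}$. This is more elementary (no appeal to the lemma's machinery, which in any case does not literally apply here since the constant term corresponds to a zero coefficient vector, excluded by the lemma's hypothesis $\mathbf{A_{0}}\ne 0$) and strictly more informative, since it yields an explicit $O(1/y)$ convergence rate rather than a purely qualitative limit. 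The only point worth tightening is the denominator bound, which should be stated for $y$ bounded away from $0$ (e.g.\ $y\ge 1$), but that is harmless for a $y\to\infty$ limit.
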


\begin{proof}
We note firstly that the function $\Phi(c) - \Phi(a-bx)$ is increasing for all $x$. If the integral is 0, the (smooth) integrand must cross 0 at some finite $x \in (y,\infty)$, and hence its value at $x=y$ must be negative. As $\Phi$ is increasing, we have $\Phi(a-by) > \Phi(c) \implies a > by +c$

The proof of the limit proceeds in a similar way to the proof of the lemma above.

\end{proof}


Now (recalling definition~\ref{eq:betastardef} in the main paper)
\begin{align}
&\int_{z_{\alpha}}^{\infty} \int_{z_{\gamma}}^{\infty}  \int_{z_{\beta^{*}}}^{\infty} N_{\Sigma_{B}} \left( (z_{d} \, z_{s} \, z_{m} )^t\right) dz_{s} dz_{m} dz_{d} \nonumber \\
&= \int_{z_{\alpha}}^{\infty}  \int_{z_{\gamma}}^{\infty} \int_{z_{\beta}}^{\infty} N_{\Sigma_{A}} \left((z_{d} \, z_{r} \, z_{m} )^t \right) dz_{r} dz_{m} dz_{d} \nonumber \\
&\implies \int_{z_{\alpha}}^{\infty} \int_{z_{\gamma}}^{\infty} N_{\Sigma_{dm}} \left((z_{d} \, z_{m})^t\right) \left( \Phi \left(a_{1} z_{d} + b_{1} z_{m} + c_{1} \right) \right. \nonumber \\
&\phantom{\implies \int_{z_{\alpha}}^{\infty} \int_{z_{\gamma}}^{\infty} N_{\Sigma_{dm}} \left((z_{d} \, z_{m})^t\right) } \left. - \Phi \left(a_{0}z_{d} + b_{0} z_{m} + c_{0} \right) \right) dz_{d} dz_{m} = 0 \label{eq:intbeta}
\end{align}
where
\begin{align}
a_{0} &= -\frac{\rho_{dm}\rho_{rm}}{\sqrt{|\Sigma_{dm}||\Sigma_{A}|}} \nonumber \\
b_{0} &= \frac{\rho_{rm}}{{\sqrt{|\Sigma_{dm}||\Sigma_{A}|}}}\nonumber \\
c_{0} &= - \sqrt{\frac{|\Sigma_{dm}|}{|\Sigma_{A}|}}z_{\beta} \nonumber \\
a_{1} &= \frac{\rho_{ds} - \rho_{dm}\rho_{sm}}{\sqrt{|\Sigma_{dm}||\Sigma_{B}|}} \nonumber \\
b_{1} &= \frac{\rho_{sm} - \rho_{ds}\rho_{dm}}{\sqrt{|\Sigma_{dm}||\Sigma_{B}|}} \nonumber \\
c_{0} &= -\sqrt{\frac{|\Sigma_{dm}|}{|\Sigma_{B}|}}z_{\beta^*} \end{align}
The asymptotic property of $\beta^*$ follows from corollary~\ref{corollary:ylimit}.

If $\gamma=1$, we have from definition~\ref{eq:betastardef2} in the main paper
\begin{align}
&\int_{z_{\alpha}}^{\infty} \int_{z_{\beta^{*}}}^{\infty} \frac{1}{2 \pi \sqrt{1-\rho_{ds}^{2}}} exp \left( -\frac{1}{2(1-\rho_{ds}^{2})} \left(x^{2}+ y^{2} - 2 \rho x y \right) \right) dx dy \nonumber \\
&= \int_{z_{\alpha}}^{\infty} \int_{z_{\beta}}^{\infty} \frac{1}{2 \pi} exp \left( -\frac{1}{2} \left(x^{2}+ y^{2} \right) \right) dx dy\nonumber \\
&\implies \int_{z_{\alpha}}^{\infty} e^{-\frac{y^{2}}{2}} \Phi \left( \frac{z_{\beta^{*}}- \rho_{ds} y}{\sqrt{1-\rho_{ds}^{2}}} \right) dy = \int_{z_{\alpha}}^{\infty} e^{-\frac{y^{2}}{2}} \Phi (z_{\beta})
\end{align}
from which the result follows from an application of lemma~\ref{lemma:genlimit}.

\subsubsection{Size of $\beta$, $\beta^*$ and $\beta^\perp$}
\label{apx:betabound}

To show that $\beta^*<\beta$, we show that if we set $z_{\beta^{*}}=z_{\beta}$ in the integral~\ref{eq:intbeta}, then the integral is positive. Since it is decreasing with $z_{\beta}^*$ (as $\Phi$ is increasing) we must have $z_{\beta}^*>z_{\beta}$ if the integral is to be 0. A similar argument can be used to show that $\beta^\perp<\beta^*$. Denote by $I(z_{d},z_{m})$ the value of the integrand of ~\ref{eq:intbeta} with $z_{\beta^*}=z_{\beta}$.

Consider the line $a_{1} z_{d} + b_{1} z_{m} + c_{1} = a_{0}z_{d} + b_{0} z_{m} + c_{0}$ on the $(z_{d},z_{m})$ plane on which the integrand of~\ref{eq:intbeta} is 0. The gradient of this line is 
\begin{align}
\frac{a_{0}-a_{1}}{b_{0}-b_{1}} &= \frac{\sqrt{n_{0}^{i}(n_{0} + n_{0}') n_{1} (n_{0} + n_{1})}}{\sqrt{(n_{1} + n_{1}')(n_{0} + n_{0}' + n_{1} + n_{1}')}} \times \\
&\phantom{=} \frac{n_{0}' n_{1} (n_{0} + n_{0}' + n_{1} + n_{1}') - (n_{0} + n_{0}')|n_{0}' n_{1} - n_{0} n_{1}'|}{n_{0}' (n_{0} + n_{0}')  n_{1} (n_{0} + n_{1}) - (n_{0}^2 + n_{0} n_{0}' + n_{0}' n_{1})|n_{0}' n_{1} - n_{0} n_{1}'|}
\end{align}
Since $|n_{0}' n_{1} - n_{0} n_{1}'| \geq (n_{0}' n_{1} - n_{0} n_{1}')$ the numerator of the second fraction is greater than or equal to
\begin{align}
n_{0}' n_{1} (n_{0} + n_{0}' + n_{1} + n_{1}') - (n_{0} + n_{0}')(n_{0}' n_{1} - n_{0} n_{1}') &= n_{0}^2 n_{1}' + n_{0}' (n_{1}^2 + n_{0} n_{1}' + n_{1} n_{1}') \nonumber \\
&> 0
\end{align}
and similarly the denominator is greater than or equal to 
\begin{equation}
n_{0} (n_{0}^2 n_{1}' + n_{0}' (n_{1}^2 + n_{0} n_{1}' + n_{1} n_{1}'))> 0
\end{equation}
so the gradient is positive. If $b_{1}-b_{0}>0$, $I(z_{d},z_{m})$ is positive if $(z_{d},z_{m})$ falls above the line, and negative if below it; if $b_{1}-b_{0}<0$, the other way around. Assume for the moment that $b_{1}-b_{0}<0$. 

If the point $(z_{\alpha},z_{\gamma})$ lies above the line, then since $I(z_{d},z_{m})$ is negative in the region $(-\infty,z_{\alpha}) \times (z_{\gamma},\infty)$, we have  
\begin{align}
\int_{z_{\alpha}}^{\infty} \int_{z_{\gamma}}^{\infty} I(z_{d},z_{m}) dz_{d} dz_{m} &\geq \int_{z_{\alpha}}^{\infty} \int_{z_{\gamma}}^{\infty} I(z_{d},z_{m}) dz_{d} dz_{m} \nonumber \\
&\phantom{=} + \int_{-\infty}^{z_\alpha} \int_{z_{\gamma}}^{\infty} I(z_{d},z_{m}) dz_{d} dz_{m} \nonumber \\
&= \int_{\infty}^{\infty} \int_{z_{\gamma}}^{\infty} I(z_{d},z_{m}) dz_{d} dz_{m}
\end{align}
If the point lies below the line, let $z_{\gamma}' > z_{\gamma}$ be defined such that the point  $(z_{\alpha},z_{\gamma}')$ lies on the line. Since $I(z_{d},z_{m})$ is positive in the region $(z_{\alpha},\infty) \times (z_{\gamma},z_{\gamma}')$ and negative in the region $(-\infty,z_{\alpha}) \times (z_{\gamma},\infty)$, we have
\begin{align}
\int_{z_{\alpha}}^{\infty} \int_{z_{\gamma}}^{\infty} I(z_{d},z_{m}) dz_{d} dz_{m} &\geq \int_{z_{\alpha}}^{\infty} \int_{z_{\gamma}}^{\infty} I(z_{d},z_{m}) dz_{d} dz_{m} \nonumber \\
&\phantom{=} - \int_{z_{\alpha}}^{\infty} \int_{z_{\gamma}}^{z_{\gamma}'} I(z_{d},z_{m}) dz_{d} dz_{m} \nonumber \\
&\phantom{=} + \int_{-\infty}^{z_\alpha} \int_{z_{\gamma}'}^{\infty} I(z_{d},z_{m}) dz_{d} dz_{m} \nonumber \\
&= \int_{-\infty}^{\infty} \int_{z_{\gamma}'}^{\infty} I(z_{d},z_{m}) dz_{d} dz_{m}
\end{align}
so it is sufficient to prove that the integral is positive when the range $(z_{\alpha},\infty)$ is replaced with $(-\infty,\infty)$. Similar arguments can be used when $b_{1}-b_{0}>0$, in which case it is sufficient to prove positivity when $z_{\gamma}=0$. 

This enables $z_{d}$ (or $z_{m}$) to be integrated out, namely reducing to showing that
\begin{align}
\int_{z_{\beta}}^{\infty} \int_{z_{\gamma}}^{\infty} N_{\left( \begin{smallmatrix} 1 & \rho_{sm} \\ \rho_{sm} & 1 \end{smallmatrix} \right)} ((z_{s} \, z_{m})^t) - N_{\left( \begin{smallmatrix} 1 & \rho_{rm} \\ \rho_{rm} & 1 \end{smallmatrix} \right)} ((z_{s} \, z_{m})^t) dz_{m} dz_{s} &>0 \nonumber \\
\Leftrightarrow \int_{z_{\beta}}^{\infty} \frac{1}{2\pi}exp\left(\frac{1}{2}z_{s}^{2}\right) \left(\Phi \left(\frac{\rho_{sm}z_{s}-z_{\gamma}}{1-\rho_{sm}^2}\right)-\Phi \left(\frac{\rho_{rm}z_{s}-z_{\gamma}}{1-\rho_{rm}^2}\right)\right) &>0
\end{align}
The second part of the integrand is monotonically increasing in $z_{s}$ as $\rho_{sm}>\rho_{rm}$. Thus the integral is minimised as $z_{\beta} \to -\infty$, at which the value is $\Phi(z_{\gamma})$, which is positive. 

\subsection{SNPs with aberrant allele frequency in one group}

\subsubsection{$R_{B}<R_{A}$ for SNPs with aberrance in $C_{1}$}
\label{apx:abc1}
If SNPs have aberrant MAF in $C_{1}$ only, we have $E(z_{d})=\zeta_{d} \neq 0$, $E(z_{m})=\zeta_{m} \neq 0$ and $E(z_{s})=E(z_{r})=0$. As noted in the main text, as $\zeta_{d} \to 0$, $R_{B}, R_{A} \to P_{0}$ (equation~\ref{eq:betastardef} in the main paper) and 
\begin{align}
\lim_{\zeta_{d} \to \infty} R_{B} &= \lim_{\zeta_{d} \to \infty} \left( \int_{z_{\alpha}-\zeta_{d}}^{\infty} \int_{z_{\beta^{*}}}^{\infty} \int_{z_{\gamma}-\zeta_{m}}^{\infty} N_{\Sigma_{B}} \left( (z_{d} \, z_{s} \, z_{m} )^t\right) dz_{s} dz_{m} dz_{d} \right. \nonumber \\
&\phantom{= \lim_{\zeta_{d} \to \infty}} \left. + \int_{z_{\alpha}+\zeta_{d}}^{\infty} \int_{z_{\beta^{*}}}^{\infty} \int_{z_{\gamma}+\zeta_{m}}^{\infty} N_{\Sigma_{B}} \left( (z_{d} \, z_{s} \, z_{m} )^t\right) dz_{s} dz_{m} dz_{d} \right) \nonumber \\
&= \Phi(-z_{\beta}^*) = \frac{\beta^*}{2}
\end{align}
and similarly, $R_{A} \to \frac{\beta}{2}$, $R_{B} \to \frac{\beta^*}{2}$ as $\zeta_{d} \to \pm \infty$, with $\beta^* < \beta$ as shown above. For $\beta^*$ defined by~\ref{eq:betastardef2} in the main paper, we show here that $R_{A}>R_{B}$ for all $\zeta_{d}$. For the more general definition of $\beta^*$ (equation~\ref{eq:betastardef} in the main paper), the inequality $R_{B}<R_{A}$ may not hold for all $\zeta_{d}$. However, in practice, the inequality holds for almost all $\zeta_{d}$ and any deviation is small and near $\zeta_{d}=0$.

Define the shorthand $N_{\rho}(x,y)$ as the value at $(x,y)$ of the bivariate normal PDF with mean $\left( \begin{smallmatrix} 0 \\ 0 \end{smallmatrix} \right)$ and variance $\left( \begin{smallmatrix} 1 & \rho \\ \rho & 1 \end{smallmatrix} \right)$, and $erfc(x) = 2 \left( 1- \Phi(\sqrt{2}x) \right)$ as the complementary error function. In this section, $\rho$ refers to $\rho_{ds}$.

Consider the value $R_{A}-R_{B}$ as a function of $\zeta_{d}$. We will show that the partial derivative $\frac{\delta}{\delta \zeta_{d}} \left(R_{A} - R_{B} \right)$ is zero only when $\zeta_{d}=0$. Because $R_{A}-R_{B}$ passes through the origin and is asymptotically positive in both directions, it is positive for all $\zeta_{d} \neq 0$. We have
\begin{align}
R_{A} - R_{B} &= \left( \int_{z_{\beta}}^{\infty} \int_{z_{\alpha}-\zeta_{d}}^{\infty}N_{0}(x,y) dx dy - \int_{z_{\beta^{*}}}^{\infty} \int_{z_{\alpha}-\zeta_{d}}^{\infty}N_{\rho}(x,y) dx dy \right) \nonumber \\
&\phantom{{} = +} + \left( \int_{-\infty}^{-z_{\beta}} \int_{-\infty}^{-z_{\alpha}+\zeta_{d}} N_{0}(x,y) dx dy - \int_{-\infty}^{-z_{\beta^{*}}} \int_{-\infty}^{-z_{\alpha}+\zeta_{d}} N_{\rho}(x,y) dx dy \right) \\
\frac{\delta}{\delta \zeta_{d}} \left(R_{A} - R_{B} \right) &= \left( \int_{z_{\beta}}^{\infty} \frac{\delta}{\delta \zeta_{d}} \int_{z_{\alpha}-\zeta_{d}}^{\infty}N_{0}(x,y) dx dy - \int_{z_{\beta^{*}}}^{\infty} \frac{\delta}{\delta \zeta_{d}} \int_{z_{\alpha}-\zeta_{d}}^{\infty}N_{\rho}(x,y) dx dy \right) \nonumber \\
&\phantom{{} = +} + \left( \int_{z_{\beta}}^{\infty} \frac{\delta}{\delta \zeta_{d}} \int_{z_{\alpha}+\zeta_{d}}^{\infty}N_{0}(x,y) dx dy - \int_{z_{\beta^{*}}}^{\infty} \frac{\delta}{\delta \zeta_{d}} \int_{z_{\alpha}+\zeta_{d}}^{\infty}N_{\rho}(x,y) dx dy \right) \nonumber \\
&= \frac{1}{2\sqrt{2\pi}} erfc \left(\frac{z_{\beta}}{\sqrt{2}}\right) \left(e^{-\frac{1}{2}(\zeta_{d}-z_{\alpha})^{2}} - e^{-\frac{1}{2}(\zeta_{d}+z_{\alpha})^{2}}\right) \nonumber 
\end{align}
\vspace{-1.5cm}
\begin{align}
&- \frac{1}{2\sqrt{2\pi}} \left( e^{-\frac{1}{2}(\zeta_{d}-z_{\alpha})^{2}}erfc \left(\frac{z_{\beta^{*}} + \rho(\zeta_{d}-z_{\alpha})}{\sqrt{2(1-\rho^{2})}}\right) - e^{-\frac{1}{2}(\zeta_{d}+z_{\alpha})^{2}} erfc \left(\frac{z_{\beta^{*}} - \rho(\zeta_{d}+z_{\alpha})}{\sqrt{2(1-\rho^{2})}}\right) \right) \nonumber
\end{align}
Showing that $\frac{\delta}{\delta \zeta_{d}} \left(R_{A} - R_{B} \right) >0$ when $\zeta_{d}>0$ is equivalent to showing that $(a-b)-(pa-qb) > 0$ where $a=
e^{-\frac{1}{2}(\zeta_{d}-z_{\alpha})^{2}}$, $b=e^{-\frac{1}{2}(\zeta_{d}+z_{\alpha})^{2}}$, $p=\frac{erfc \left(\frac{z_{\beta^{*}} + \rho(\zeta_{d}-z_{\alpha})}{\sqrt{2(1-\rho^{2})}}\right)}{erfc \left(\frac{z_{\beta}}{2}\right)}$ and $q=\frac{erfc \left(\frac{z_{\beta^{*}} + \rho(\zeta_{d}-z_{\alpha})}{\sqrt{2(1-\rho^{2})}}\right)}{erfc \left(\frac{z_{\beta}}{2}\right)}$.

Since $(\zeta_{d}-z_{a})^{2}<(\zeta_{d}+z_{a})^{2}$ for $\zeta_{d}>0$, we have $a>b$. Because $erfc$ is strictly decreasing, we have $p<q$. Because $\frac{\delta p}{\delta \zeta_{d}} <0$, we have
\begin{equation}
p < \frac{erfc \left(\frac{z_{\beta^{*}} -z_{\alpha})}{\sqrt{2(1-\rho^{2})}}\right)}{erfc \left(\frac{z_{\beta}}{2}\right)} < 1
\end{equation}
where the second inequality arises because $z_{\beta^{*}}>
\sqrt{1-\rho^{2}}z_{\beta} + \rho z_{\alpha}$. Thus $pa-qb<pa-pb=p(a-b)<a-b$, and $\frac{\delta}{\delta \zeta_{d}} \left(R_{A} - R_{B} \right) >0$ as required.


\subsection{Upper bound on $R_{B}-R_{A}$ with aberrance in $C_{1}'$}
\label{apx:c1vaberrance}

For SNPs with aberrance in $C_{1}'$, we have $E(z_{d})=0$, $E(z_{r})=\zeta_{r} \neq 0$, $E(z_{s})=\zeta_{s} \neq 0$ and $E(z_{m})=\zeta_{m} \neq 0$. As above $R_{A},R_{B} \to P_{0}$ as $\zeta_{r} \to 0$, and by similar arguments to the section above, $R_{A},R_{B} \to \frac{\alpha}{2}$ as $\zeta_{r} \to \pm \infty$. 

For $\beta^*$ defined as per equation~\ref{eq:betastardef2} in the main paper, it is possible to derive an approximate (asymptotically accurate) upper bound on $R_{B}-R_{A}$, corresponding to the most serious increase in FPR. The approach is not readily applied to the general definition of $\beta^*$, but again the difference is typically small in practice. 

To first order
\begin{equation}
\frac{\zeta_{s}}{\zeta_{r}} = \frac{\left(\frac{\mu_{1}'-\mu_{0}'}{SE(m_{1}'-m_{0}')} \right)}{\left(\frac{\mu_{1}'-\mu_{0}}{SE \left( m_{1}'-\frac{m_{0}n_{0} + m_{0}'n_{0}'}{n_{0}+n_{0}'} \right) } \right)} = \sqrt{\frac{(n_{0}+n_{0}')(n_{0}'+n_{1}'}{n_0'(n_0+n_0'+n_1')}} \myeq k \label{eq:k1def} 
\end{equation}
%
%
%
%
Now
\begin{align}
R_{B} - R_{A} &= \left( \int_{z_{\beta^{*}}- \zeta_{s}}^{\infty} \int_{z_{\alpha}}^{\infty}N_{\rho}(x,y) dx dy - \int_{z_{\beta}-\zeta_{r}}^{\infty} \int_{z_{\alpha}}^{\infty}N_{0}(x,y) dx dy \right) \nonumber \\
&\phantom{{} = +} + \left( \int_{z_{\beta^{*}}+ \zeta_{s}}^{\infty} \int_{z_{\alpha}}^{\infty}N_{\rho}(x,y) dx dy - \int_{z_{\beta}+\zeta_{r}}^{\infty} \int_{z_{\alpha}}^{\infty}N_{0}(x,y) dx dy \right)
\end{align}
Define $z_{r}^+$, $z_{r}^-$ such that
\begin{align}
\int_{z_{r}^-}^{\infty} \int_{z_{\alpha}}^{\infty}N_{0}(x,y) dx dy &= \int_{z_{\beta^{*}}- \zeta_{s}}^{\infty} \int_{z_{\alpha}}^{\infty}N_{\rho}(x,y) dx dy \nonumber \\
\int_{z_{r}^+}^{\infty} \int_{z_{\alpha}}^{\infty}N_{0}(x,y) dx dy &= \int_{z_{\beta^{*}}+ \zeta_{s}}^{\infty} \int_{z_{\alpha}}^{\infty}N_{\rho}(x,y) dx dy
\end{align}
From equation~\ref{eq:infbetastar2} in Appendix~\ref{apx:infbetastar}, we have $z_{\beta^*}-\zeta_{s} \approx \sqrt{1-\rho^2}z_{r}^{-} - \rho z_{\alpha}$ and $z_{\beta^*}+\zeta_{s} \approx \sqrt{1-\rho^2}z_{r}^{+} - \rho z_{\alpha}$.

Noting that $\int_{a}^\infty \int_{b}^{\infty} N_{0}(x,y) dx dy = \Phi(-a) \Phi(-b)$ and $\Phi(x) = 1-\Phi(-x)$ we now have
\begin{align}
R_{B} - R_{A} &= 
\Phi(-z_{\alpha}) \left( \Phi(z_{\beta} - \zeta_{r}) - \Phi(z_{r}^-) +  \Phi(z_{\beta}+\zeta_{r}) - \Phi(z_{r}^+) \right)
\end{align}
Applying the identity $\Phi(-z_{\alpha})=\frac{\alpha}{2}$ and approximations $z_{\beta}^* \approx \sqrt{1-\rho^{2}}z_{\beta} + \rho z_{\alpha}$, $\zeta_{s}\approx k \zeta_{r}$, yields
\begin{align}
z_{r}^{-} &\approx \frac{z_{\beta^*}- \zeta_{s}+ \rho z_{\alpha}}{\sqrt{1-\rho^2}} \approx z_{\beta} - \frac{k}{\sqrt{1-\rho^{2}}} \zeta_{r} \nonumber \\
z_{r}^{+} &\approx z_{\beta} + \frac{k_{1}}{\sqrt{1-\rho^{2}}} z_{0}' 
\end{align}
\begin{equation}
R_{B}-R_{A} \approx \frac{\alpha}{2} \left( \Phi \left( z_{\beta} - \frac{k}{\sqrt{1-\rho^{2}}} \zeta_{r} \right)-\Phi(z_{\beta}-\zeta_{r}) +  \Phi \left( z_{\beta} + \frac{k}{\sqrt{1-\rho^{2}}} \zeta_{r} \right)-\Phi(z_{\beta}+ \zeta_{r}) \right)
\end{equation}
Considered as a function of $\zeta_{r}$, the value $R_{B}-R_{A}$ will be 0 at $\zeta_{r}=0$ and tend to 0 as $\zeta_{r} \to \pm \infty$. It will be maximised approximately at the points where $\Phi(z_{\beta}-\zeta_{r})$ or $\Phi(z_{\beta}+\zeta_{r})$ are changing most rapidly; that is, $\zeta_{r}=\pm z_{\beta}$. At $\zeta_{r}= z_{\beta}$, the contribution to the value $R_{B}-R_{A}$ from the difference $\Phi \left( z_{\beta} + \frac{k}{\sqrt{1-\rho^{2}}} \zeta_{r} \right)-\Phi(z_{\beta}+\zeta_{r})$ is negligible (and similarly for the other difference when $\zeta_{r}= -z_{\beta}$). Using the first-order approximation for $\Phi(z_{\beta}-\zeta_{r})$ about $\zeta_{r}=z_{\beta}$ yields
\begin{equation}
max \left(R_{B}-R_{A} \right) \approx \frac{\alpha}{2 \sqrt{2 \pi}} \left(\frac{k}{\sqrt{1-\rho^{2}}} - 1 \right) z_{\beta}
\end{equation}
In general, this value is substantially less than $\alpha$. 

All instances of `approximately equal' are asymptotic limits as $z_{a} \to \infty$ and $n_{0}$, $n_{0'}$, $n_{1}$, $n_{1}' \to \infty$ with $z_{0}'$ held finite. 

\subsection{Aberrance in $C_{0}'$}

For SNPs aberrant in $C_{0}'$, again $E(z_{d})=0$, $E(z_{r})=\zeta_{r} \neq 0$, $E(z_{s})=\zeta_{s} \neq 0$ and $E(z_{m})=\zeta_{m} \neq 0$. As above $R_{A},R_{B} \to P_{0}$ as $\zeta_{r} \to 0$, and $R_{A},R_{B} \to \frac{\alpha}{2}$ as $\zeta_{r} \to \pm \infty$. In this case, $R_{B}$ is typically less than $R_{A}$. 

\subsection{General aberrance in replication cohorts}

For $\beta^*$ defined according to \ref{eq:betastardef2} in the main paper, the increase in FPR $R_{B}-R_{A}$ for method B for a SNP with aberrance in $C_{1}'$ is generally smaller than the decrease in FPR $R_{A}-R_{B}$ for a SNP with a similarly-sized aberrance in $C_{0}'$, in that the integral of the difference over the range of $\zeta_{r}$ is generally smaller in the former.

We define $k$ as in the section above and $k_{1}=\frac{\zeta_{s}}{\zeta_{r}} \big| C_{0}' \textrm{ aberrant} = \sqrt{\frac{n_{0}'(n_{0}' + n_{1}')}{(n_{0}+n_{0}')(n_{0}+n_{0}'+n_{1}')}}$ similarly. 
%
%
%
%
Now for $C_{0}'$ aberrant
\begin{equation}
R_{A}-R_{B} \approx \frac{\alpha}{2} \left( \Phi(z_{\beta}-\zeta_{r}) - \Phi \left( z_{\beta} - \frac{k_{1}}{\sqrt{1-\rho^{2}}} \zeta_{r} \right) + \Phi(z_{\beta}+\zeta_{r}) - \Phi \left( z_{\beta} + \frac{k_{1}}{\sqrt{1-\rho^{2}}} \zeta_{r} \right) \right)
\end{equation}
%
Since $\int_{0}^{x} \Phi(z) dz =x \Phi(x) + \frac{1}{\sqrt{2 \pi}} \left( e^{-\frac{x^{2}}{2}} - 1 \right)$, we have 
\begin{align}
\int_{0}^{\infty} \left( \Phi(h-z) - \Phi(h - k z) \right) dz 
&= \left(1-\frac{1}{k} \right) \left( \frac{1}{\sqrt{2 \pi}} e^{-\frac{1}{2} h^{2}} + h \Phi(h) \right)
 \\
\int_{0}^{\infty} \left( \Phi(h+z) - \Phi(h + k z) \right) dz &= \left(1-\frac{1}{k} \right) \left( -\frac{1}{\sqrt{2 \pi}} e^{-\frac{1}{2} h^{2}} + h \Phi(-h) \right)
\end{align}
%
Thus with aberrant $C_{0}'$
\begin{equation}
\int_{0}^{\infty} \left( R_{A}-R_{B}  \right) d\zeta_{r} = \frac{\alpha}{2}  \left(1-\frac{\sqrt{1-\rho^{2}}}{k_{1}} \right) z_{\beta}
\end{equation}
Comparing $R_{A}$ and $R_{B}$ under the two aberrance scenarios with the same $\zeta_{d}$
\begin{equation}
\frac{\int_{0}^{\infty} \left( R_{A}-R_{B}  \right) d\zeta_{d} \, \, \textrm{[$C_{0}'$ aberrant]}}{\int_{0}^{\infty} \left( R_{B}-R_{A}  \right) d\zeta_{d} \, \, \textrm{[$C_{1}'$ aberrant]}} = \frac{1-\frac{\sqrt{1-\rho^{2}}}{k_{1}}}{\frac{\sqrt{1-\rho^{2}}}{k}-1}
\end{equation}
For this to be $>1$, a necessary condition is 
%
$\left(1-\frac{\sqrt{1-\rho^{2}}}{k_{2}}\right) > \left(\frac{\sqrt{1-\rho^{2}}}{k_{1}}-1 \right)$
%
From the definitions of $\rho_{ds}$ (Appendix~\ref{apx:zcor}), $k$ (equation~\ref{eq:k1def}) and $k_{1}$, this is equivalent to
\begin{align}
\sqrt{\frac{n_{0}+n_{0}'+n_{1}'}{n_{0}'+n_{1}'}}\sqrt{1-\frac{n_{0}n_{1}n_{1}'}{(n_{0}+n_{0}')(n_{0}+n_{1})(n_{0}+n_{0}'+n_{1}')}} \left(\sqrt{\frac{n_{0}'}{n_{0}+n_{0}'}} + \sqrt{\frac{n_{0}+n_{0}'}{n_{0}}} \right) > 2 \nonumber
\end{align}
The final term in this product is of the form $x + \frac{1}{x}$ so is greater than 2. A sufficient condition is thus
\begin{align}
\frac{n_{0}+n_{0}'+n_{1}'}{n_{0}'+n_{1}'}\left( 1-\frac{n_{0}n_{1}n_{1}'}{(n_{0}+n_{0}')(n_{0}+n_{1})(n_{0}+n_{0}'+n_{1}')} \right) &\geq 1 \nonumber \\
\Longleftrightarrow n_{0}^{2} + n_{0}(n_{0}'+n_{1}) + n_{1}(n_{0}'-n_{1}') &\geq 0
\end{align}
which holds in most study designs.

\end{document}